\documentclass[journal]{IEEEtran}

\usepackage{cite}
\usepackage{amsmath,amsfonts,amssymb}
\usepackage{graphicx}
\usepackage{epstopdf}
\usepackage{bbm}
\usepackage{comment}
\usepackage{dsfont}
\usepackage{amsthm}
\usepackage{float}
\usepackage{subfigure}
\usepackage{stfloats}
\usepackage{bm}
\usepackage{siunitx}
\usepackage{algorithm}
\usepackage{algpseudocode}
\usepackage[color=red!30]{todonotes}
\usepackage{balance}

\usepackage{graphicx}
\graphicspath{{Figures/}}

\usepackage{xcolor}
\definecolor{darkgreen}{RGB}{0,100,0}    
\definecolor{darkblue}{RGB}{0,0,139}    
\definecolor{darkred}{rgb}{0.55, 0.0, 0.0}

\usepackage[colorlinks=true, urlcolor = darkblue, citecolor = darkgreen, linkcolor=darkred]{hyperref}

\newtheorem{lemma}{\textbf{Lemma}}

\newtheorem{remark}{\textbf{Remark}}

\newcommand{\order}[1]{\ensuremath{{#1}\textsuperscript{th}}}
\newcommand{\secref}[1]{Section \ref{#1}} 

\usepackage{mathtools}

\begin{document}
\bstctlcite{IEEEexample:BSTcontrol}

\title{Symbol-Aware Precoder Design for Physical-Layer Anonymous Communications}

\author{Yu~Li, Milad~Tatar~Mamaghani, Xiangyun~Zhou,~\IEEEmembership{Fellow,~IEEE}, and Nan~Yang,~\IEEEmembership{Senior Member,~IEEE}
\thanks{ Y. Li, X. Zhou and N. Yang are with the School of Engineering, the Australian National University, Canberra, ACT 2601, Australia (e-mail: yu.li1@anu.edu.au; xiangyun.zhou@anu.edu.au; nan.yang@anu.edu.au). Part of this work was conducted while M. Tatar Mamaghani (e-mail: milad.tatarmamaghani@gmail.com) was previously with the Australian National University. This work was supported by the Australian Research Council’s Discovery Projects (DP220101318). }
}

\maketitle

\begin{abstract}
Physical-layer characteristics, such as channel state information (CSI) and transmitter noise induced by hardware impairments, are often uniquely associated with a transmitter. This paper investigates transmitter anonymity at the physical layer from a signal design perspective. We consider an anonymous communication problem where the receiver should reliably decode the signal from the transmitter  but should not make use of the signal to infer the transmitter's identity.
Transmitter anonymity is quantified using a Kullback-Leibler divergence (KLD)-based metric, which enables the formulation of explicit anonymity constraints in the precoder design.
We then propose an anonymous symbol-level precoding strategy that preserves reliable communication under spatial multiplexing while preventing transmitter identification. The proposed framework employs a partitioned equal-gain combining (P-EGC) scheme that leverages receiver diversity without requiring transmitter-specific CSI. Simulation results demonstrate anonymity-reliability tradeoffs across different signal-to-noise ratios (SNRs) and numbers of data streams. Moreover, the results reveal opposite trends of anonymity with respect to transmitter-dependent noise variations in the low-SNR and high-SNR regimes.
\end{abstract}

\begin{IEEEkeywords}
Physical-layer anonymous communications, symbol-level precoding, MIMO, Kullback-Leibler divergence
\end{IEEEkeywords}
\section{Introduction}
\label{sec:introduction}

\IEEEPARstart{T}{he} proliferation of next-generation wireless networks is reshaping the landscape of communications by enabling a broad spectrum of Internet of Things (IoT) applications, including vehicular networks, uncrewed aerial vehicle (UAV) communications, and body sensor systems \cite{nguyen20216g, letaief2019roadmap, saad2019vision}. While these advances offer unprecedented connectivity, the broadcast nature of wireless transmissions  inherently exposes such systems to various security threats such as eavesdropping, jamming, spoofing, and traffic analysis \cite{mitev2023physical, nguyen2021security, tripi2024security}. These vulnerabilities are particularly critical in human-centric IoT services, where sensitive information, such as user identities, locations, and physiological data, is routinely exchanged, and unauthorized disclosure may lead to severe privacy violations and safety risks \cite{yang2017survey, sandeepa2022survey}.


Conventional upper-layer security mechanisms often face practical limitations in IoT environments, due to stringent constraints on power consumption, computational capability, and latency \cite{angueira2022survey}. In this context, physical-layer security (PLS) has emerged as an effective security enhancement by exploiting the intrinsic properties of wireless channels and transmit signals to provide lightweight and scalable protection, without relying on computationally intensive and complex cryptographic protocols. Under the general umbrella of PLS, two complementary objectives can be identified: \textit{confidentiality}, which safeguards the transmitted content from unauthorized access \cite{bloch_barros_2011, zhou2016physical, mamaghani2024performance}, and \emph{privacy}, which prevents adversaries---or even legitimate parties---from inferring sensitive user-related information, such as the identity or even the existence of the transmitter, from the received signals. Privacy-oriented techniques are generally developed under two different principles: anonymity aims to conceal the transmitter’s identity within a group of potential users \cite{wei2025physical}, whereas unobservability seeks to hide the very existence of transmissions, often through low-intercept-probability communications such as covert communications \cite{bash2015hiding}.

Transmitter anonymity is particularly critical in several emerging wireless applications. In vehicular systems, periodic beaconing enables safety-critical functions, such as navigation and collision avoidance,  yet roadside adversaries can exploit physical-layer signal characteristics to perform persistent vehicle identification and trajectory reconstruction, even in the presence of pseudonym-based protocols \cite{tzeng2015enhancing, lu2018survey}. Similar vulnerabilities arise in UAV communications, where transmitter identification can expose mission-critical information and enable tracking, jamming, or physical capture \cite{tedeschi2023privacy, fotouhi2019survey, pai2008confidentiality}. In body area networks, physical-layer identification by third-party base stations can further link sensitive physiological data to individual users, leading to severe privacy violations and potential commercial exploitation \cite{gope2015bsn}. In these scenarios, the physical-layer signals often carry device-specific information or location-dependent channel characteristics, which can be exploited for transmitter identification.  Therefore,  anonymity mechanisms operating above the physical layer may no longer be sufficient against adversaries capable of exploiting physical-layer features. This limitation motivates the development of physical-layer anonymity techniques that aim to mitigate identity leakage at the signal level \cite{weithepath}.


\subsection{Related Works}

In this work, we focus on the analysis and design of physical-layer anonymous communications. This is a new topic of research with only a handful of prior works. The existing physical-layer anonymity schemes can be broadly classified into \emph{strong-transmitter} and \emph{strong-receiver} architectures, depending on the relative number of transmit and receive antennas. In the strong-transmitter case, where the number of transmit antennas exceeds that of the receiver, reliable anonymous communication can be achieved solely through transmitter-side precoding  by exploiting the excess spatial degrees of freedom (DoF) \cite{wei2021fundamentals}. For this case, two classes of precoders have been developed: i) interference-suppression-based designs that employ transmitter-side phase equalization and ii) constructive-interference-based symbol-level precoders that steer signals into correct decision regions, thus eliminating the need for explicit equalization at either the transmitter or receiver. In contrast, the strong-receiver case, in which receive antennas outnumber transmit antennas, requires receiver-side combining to exploit the reception diversity as the transmitter-side precoding alone is insufficient to control the signal structure across all receive antennas \cite{wei2022physical1, wei2023phy}. An early design for this case employed channel-independent equal-gain combining (EGC) with a single data stream to preserve anonymity under heterogeneous transmitter antenna configurations \cite{wei2022physical1}. More recently, \cite{wei2023phy} investigated the multiplexing–diversity tradeoff and introduced alias-channel-based combiners to jointly exploit spatial multiplexing and reception diversity while maintaining anonymity. Specifically, the combiner was constructed from the averaged channel responses of the true transmitter and multiple alias users, implicitly treating them as equally likely during receiver-side transmitter detection. While these works differ in transceiver structure and signal processing design, they adopt channel state information (CSI) as the underlying 
feature for transmitter identification. Beyond CSI masking in \cite{weithepath,wei2021fundamentals, wei2022physical1, wei2023phy}, other physical-layer characteristics have been considered. For instance, joint masking of CSI and in-phase/quadrature (IQ) imbalance was explored as a means of enhancing physical-layer anonymity in \cite{cui2025physical}.

To quantify physical-layer anonymity, several performance metrics have been established in the literature. The most commonly used metric is the detection error rate (DER), which measures the probability that the receiver fails to identify the active user. A larger DER indicates a higher level of anonymity. Complementarily, anonymity entropy was proposed to characterize the receiver's uncertainty regarding the transmitter identity in a manner analogous to Shannon entropy \cite{cui2024closed, wei2022physical1}. More general analytical frameworks based on detection-error exponents and divergence-based measures were introduced in \cite{wei2025physical}, while closed-form DER expressions for specific systems were derived in \cite{cui2024closed, wei2023phy, wang2025closed}. These metrics provide a principled basis for analyzing anonymity performance and the fundamental tradeoff between anonymity and communication reliability.

 Despite the merits of the aforementioned works, two important limitations still remain in the existing literature. First, to the best of our knowledge, all existing anonymous precoding designs assume \emph{homogeneous} transmitter noise statistics across users, and hence neglect user-dependent distortion due to hardware impairments, which creates user-dependent noise statistics at the receiver that can be used for identification. Second, existing transmitter identification models are \emph{symbol-unaware}, implicitly assuming that the adversary does not exploit decoded data symbols when performing identification. In practice, however, a receiver capable of reliable decoding can leverage symbol-dependent statistics to significantly enhance identification performance. The impact of such symbol-aware transmitter identification on anonymous precoder design has not been systematically investigated. Addressing these limitations is essential for developing physical-layer anonymity mechanisms that remain robust against information-rich adversarial receivers.

\subsection{Our Contributions}
To address the aforementioned limitations, this paper makes the following contributions:  
\begin{itemize}

 \item We introduce a \emph{symbol-aware adversarial model} for physical-layer transmitter identification, in which a receiver exploits all available information, including reliably decoded data symbols, to enhance identification performance. Unlike existing physical-layer anonymity frameworks that rely on symbol-unaware detection models, we explicitly incorporate symbol-dependent statistics into the identification process and analyze their implications for the anonymous precoder design.
    
\item We propose a \emph{generalized physical-layer anonymity framework} that jointly accounts for CSI and transmitter noise statistics. By explicitly modeling heterogeneous transmitter noise statistics in both the identification model and the precoder design, the proposed framework addresses a critical and previously overlooked source of identity leakage and enables anonymity guarantees in more general network settings.

\item We develop an \emph{anonymous symbol-level precoding} scheme for a multi-user multiple-input-multiple-output (MU-MIMO) system that incorporates explicit anonymity constraints based on the Kullback-Leibler divergence (KLD) between transmitter hypotheses, while preserving reliable spatial multiplexing performance under symbol-aware identification. We consider a strong-receiver scenario and propose a \emph{partitioned equal-gain combining} (P-EGC) architecture that enables reliable multi-stream reception without requiring transmitter-specific CSI, alias-channel construction, or the assumption of equal likelihood among the true and alias transmitters. The proposed P-EGC depends only on the antenna configuration and the number of data streams, making it compatible with scenarios where users exhibit unequal detectability at the receiver and robust to heterogeneous transmitter noise statistics.

\end{itemize}

\subsection{Organization and Notation}

The remainder of this paper is organized as follows: \secref{sec:system} presents the system model. \secref{sec:detection} introduces the transmitter detection strategy and the associated anonymity metrics. \secref{sec:anony_design} focuses on the anonymous communication design, including the proposed precoder, combiner architectures, and the corresponding optimization formulation. Simulation results and discussions are provided in \secref{sec:simulations}, followed by conclusions drawn in \secref{sec:conclusion}.

\textit{Notation}:
Unless otherwise stated, the following notation is used throughout the paper. Boldface lowercase and uppercase letters denote vectors and matrices, respectively. The operators $(\cdot)^H$,  $(\cdot)^\dagger$, $\|\cdot\|$,  and $\|\cdot\|_F$  denote the conjugate transpose, Moore--Penrose pseudo-inverse, Euclidean norm,  and Frobenius norm, respectively. Trace and expectation are denoted by $\operatorname{tr}(\cdot)$ and $\operatorname{E}[\cdot]$, respectively. The natural logarithm is denoted by $\ln(\cdot)$. In addition, $\mathcal{CN}(\mathbf{0},\mathbf{\Sigma})$ represents a circularly symmetric complex Gaussian  (CSCG) distribution with zero mean and covariance matrix $\mathbf{\Sigma}$, $\mathbf{I}_n$  represents the $n \times n$ identity matrix,  $\mathcal{O} (\cdot)$ represents the big-O notation, and $\Re\{\cdot\}$ and $\Im\{\cdot\}$ represent the real and imaginary parts, respectively. 
\section{System Model}\label{sec:system}
We consider an uplink MU-MIMO system comprising $K$ transmitting users and a single receiver acting as a central server. Each user is equipped with $N_t$ transmit antennas, while the receiver employs $N_r$ receive antennas. We assume $N_r > N_t$, which is a typical uplink scenario and is referred to as the strong-receiver scenario in the literature of physical-layer anonymous communications. The system operates under a collision-free time division multiple access (TDMA) protocol, such that one user is active in each timeslot. 

In our considered system, user scheduling is governed by an anonymous access mechanism (e.g., randomized contention or coordination among transmitters) \cite{song2012laa}, and the resulting scheduling information is not revealed to the receiver. Consequently, although exactly one user transmits in each timeslot, the receiver does not know the identity of the active user, which is consistent with the standard physical-layer anonymous communication setting \cite{weithepath, wei2021fundamentals, wei2022physical1, wei2023phy,cui2024closed,cui2025physical}.

We adopt a semi-trustworthy receiver model, in which the receiver acts as a curious service provider that reliably decodes transmitted data while simultaneously attempting to infer the transmitter's identity. The purpose of physical-layer anonymous communication is to allow reliable decoding while preventing the receiver from successfully inferring the transmitter's identity based on the received signal. As in \cite{weithepath,wei2021fundamentals, wei2022physical1, wei2023phy,cui2024closed, cui2025physical  }, it is assumed that all communication nodes have perfect knowledge of CSI.


\subsection{Signal Representation}

Let $k \in \mathcal{K}=\{1, 2, \cdots, K\}$ be the index of the scheduled user for transmission. We adopt a block-fading model in which the channel remains constant throughout a transmission block and varies independently across blocks.  We denote $\mathbf{H}_k \in \mathbb{C}^{N_r \times N_t}$ as the channel matrix from the \order{k} user to the receiver and $\mathbf{W}_k \in \mathbb{C}^{N_t \times N_s}$ as the associated precoding matrix, where $N_s$ is the number of data streams such that $N_s \leq N_t$. 

The received signal corresponding to the \order{k} user is expressed as
\begin{equation}\label{Yk}
\mathbf{Y}_k = \mathbf{H}_k \mathbf{W}_k \mathbf{S} + \mathbf{N}_k,
\end{equation}
where $\mathbf{S} \in \mathbb{C}^{N_s \times L}$ is the transmitted symbol matrix over a block of length $L$ and $\mathbf{N}_k \in \mathbb{C}^{N_r \times L}$ represents the aggregated distortion observed at the receiver. Consistent with prior studies (e.g., \cite{studer2010mimo, bjornson2013hardware}), the combined effect of transmitter-induced hardware impairments and receiver noise  is modeled as a user-dependent equivalent Gaussian disturbance, i.e., $\mathbf{N}_k \sim \mathcal{CN}(\mathbf{0}, \sigma_k^2 \mathbf{I}_{N_rL})$, where $\sigma_k^2$ denotes the effective noise power associated with user~$k$.


\begin{remark}
In practical systems, residual hardware impairments arising from non-ideal RF components such as power amplifiers, IQ modulators, oscillators, and digital-to-analog converters (DACs), persist despite calibration and compensation, and their severity may vary across devices \cite{zhang2014mimo, studer2010mimo, bjornson2013hardware}. These transmitter-originated distortions can have a non-negligible impact on the received signal, particularly in MIMO systems employing sensitive detection schemes \cite{suzuki2008transmitter}. As a result, transmitter hardware quality directly influences the statistical property of the received signal and constitutes an important source of identity leakage in physical-layer anonymous communications.
\end{remark}

\section{Transmitter Detection Strategy}\label{sec:detection}
In this section, we investigate the receiver-side detection mechanism used to evaluate transmitter anonymity performance. We consider a scenario in which the receiver seeks to identify the active user based on the observed signal $\mathbf{Y}$ and the known CSI of all users. Since the transmitter identity is anonymous, the received signal in \eqref{Yk} is denoted by $\mathbf{Y}$ without a user-specific subscript. The transmitter identification  problem can be considered as a multiple-hypothesis testing (MHT) formulation, expressed as
\begin{align}\label{mht_prob}
\begin{cases}
\mathcal{H}_1 &:  \mathbf{Y} = \mathbf{H}_1 \mathbf{W}_1 \mathbf{S} + \mathbf{N}_1,\\
\mathcal{H}_2 &:  \mathbf{Y} = \mathbf{ H}_2 \mathbf{W}_2 \mathbf{S} + \mathbf{N}_2,\\
&\vdots  \\
\mathcal{H}_K &:  \mathbf{Y} = \mathbf{ H}_K \mathbf{W}_K \mathbf{S} + \mathbf{N}_K,
\end{cases}
\end{align}
where the hypothesis set $\pmb{\mathcal{H}}=\{\mathcal{H}_1, \mathcal{H}_2, \ldots, \mathcal{H}_K \}$ represents $K$ possible candidates for the originating transmitter and $\mathbf{ H}_i$ and $\mathbf{W}_i$ denote the channel and precoder of user $i\in\mathcal{K}$, respectively.


Two major methodologies exist for composite hypothesis testing: the Bayesian approach and the generalized likelihood ratio test (GLRT) \cite{kay1998detection}. For the MHT problem considered here, we adopt the GLRT framework, as it does not require prior distributions on the unknown precoding matrices and is therefore more suitable and robust for anonymous communication scenarios. The first step in GLRT is to obtain the maximum-likelihood estimate of the unknown parameters in the linear model given in \eqref{mht_prob}. Since the objective of the anonymous communication system is to guarantee transmitter anonymity while preserving reliable communication, the transmitted symbols must be correctly decoded at the receiver. Consequently, when user identification is performed after decoding, the symbol matrix $\mathbf{S}$ can be treated as known. Under this assumption, the only remaining unknown parameter within each hypothesis is the precoding matrix employed by the corresponding user.



If the actual transmitter is the \order{k} user, then the hypothesis $\mathcal{H}_k$ is correct, and the actual transmitted signal is $\mathbf{Y} = \mathbf{ H}_k \mathbf{ W}_k \mathbf{S} + \mathbf{N}_k$. Under $\mathcal{H}_k$, the maximum-likelihood estimate of the precoder is 
\begin{equation*}
\mathbf{\widehat{W}}_k = \mathbf{H}_k ^{\dagger} \mathbf{Y} \mathbf{S}^{\dagger},
\end{equation*}
where $\mathbf{H}_k ^{\dagger} = (\mathbf{H}_k ^H \mathbf{H}_k)^{-1} \mathbf{H}_k ^H$
and $\mathbf{S} ^{\dagger} = \mathbf{S}^H(\mathbf{S} \mathbf{S}^H)^{-1} $.
Similarly, under incorrect hypothesis $\mathcal{H}_i$, the estimated precoder becomes $\mathbf{\widehat{W}}_i = \mathbf{H}_i ^{\dagger} \mathbf{Y} \mathbf{S}^{\dagger}$. Using these estimates, the reconstructed received signals under the correct and incorrect hypotheses  are
\begin{align}\label{Yk_hat}
\mathbf{\widehat{Y}}_k=\mathbf{ H}_k \mathbf{\widehat{W}}_k \mathbf{S}=\mathbf{ H}_k \mathbf{ W}_k \mathbf{S}+ \mathbf{ H}_k \mathbf{ H}_k^\dagger \mathbf{N}_k \mathbf{S}^{\dagger} \mathbf{S}
\end{align}
and 
\begin{align}\label{Yi_hat}
\mathbf{\widehat{Y}}_i=\mathbf{ H}_i \mathbf{ \widehat{W}}_i\mathbf{S}= \mathbf{H}_i \mathbf{H}_i^\dagger (\mathbf{ H}_k \mathbf{ W}_k \mathbf{S} + \mathbf{N}_k) \mathbf{S}^\dagger \mathbf{S},
\end{align}
respectively.

Accordingly, the probability density functions (PDFs) of received signal $\mathbf{Y}$ under each hypothesis are given by
\begin{equation}\label{pdfs}
\begin{split}
\mathcal{H}_1 &:  p(\mathbf{Y}; \mathbf{\widehat{W}}_1,\mathcal{H}_1) = \frac{1}{(\pi \sigma_1^2)^{N_r L}} \exp\left( - \frac{1}{\sigma_1^2} \| \mathbf{Y} - \mathbf{\widehat{Y}}_1 \|_F^2 \right),\\
\mathcal{H}_2 &:  p(\mathbf{Y}; \mathbf{\widehat{W}}_2,\mathcal{H}_2) = \frac{1}{(\pi \sigma_2^2)^{N_r L}} \exp\left( - \frac{1}{\sigma_2^2} \| \mathbf{Y} - \mathbf{\widehat{Y}}_2 \|_F^2 \right),\\   
&\vdots  \\
\mathcal{H}_k &: p(\mathbf{Y}; \mathbf{\widehat{W}}_k,\mathcal{H}_k) = \frac{1}{(\pi \sigma_k^2)^{N_r L}} \exp\left( - \frac{1}{\sigma_k^2} \| \mathbf{Y} - \mathbf{\widehat{Y}}_k \|_F^2 \right).
\end{split}
\end{equation}
Under the GLRT framework, the receiver selects hypothesis $\mathcal{H}_k$ if its likelihood $ p(\mathbf{Y}; \mathbf{\widehat{W}}_k,\mathcal{H}_k)$ is the largest among all candidates. 
In maximum-likelihood estimation (MLE) and hypothesis testing, it is common to employ the log-likelihood as the test statistic due to its monotonicity and numerical stability. Accordingly, we define
\[T_k = \ln p(\mathbf{Y}; \mathbf{\widehat{W}}_k,\mathcal{H}_k) \]
as the test statistic under the hypothesis $\mathcal{H}_k$ such that the receiver decides in favor of the hypothesis associated with the maximal $T_k$.

\subsection{Anonymity Metric: Detection Error Rate}

In this paper, we evaluate transmitter anonymity through the probability that the receiver fails to correctly identify the active user. Consistent with existing studies on physical-layer anonymity (e.g., \cite{wei2023phy, cui2024closed}), we adopt the DER as the anonymity metric. To characterize the DER, we examine the statistical behavior of the detection test statistic $T$ under the correct and incorrect hypotheses. 

Suppose the \order{k} user is the actual transmitter, implying that the true hypothesis is $\mathcal{H}_k$, while any $\mathcal{H}_i$ ($\forall i \in \mathcal{K}$ and $i \neq k$) represents an incorrect hypothesis. For analytical tractability, the corresponding  log-likelihood test statistics can be written as
\begin{subequations}\label{test_stats}
    \begin{align}
    T_k &= \ln p(\mathbf{Y}; \mathbf{\widehat{W}}_k,\mathcal{H}_k) \nonumber\\
    &= -N_r L \ln (\pi \sigma_k^2) - \frac{1}{ \sigma_k^2} \| \mathbf{\widehat{Y}}_k - \mathbf{Y} \|_F^2\\
\intertext{and}
    T_i &= \ln p(\mathbf{Y}; \mathbf{\widehat{W}}_i,\mathcal{H}_i) \nonumber\\
    &= -N_r L \ln (\pi \sigma_i^2) - \frac{1}{ \sigma_i^2} \| \mathbf{\widehat{Y}}_i - \mathbf{Y} \|_F^2.
    \end{align}
\end{subequations}

Since the receiver identifies the transmitting user based on the largest test statistic, an incorrect detection occurs when $T_i > T_k$ for some $i\neq k$. To characterize this event, we introduce the difference random variable $D_i = T_i - T_k$, which measures the \emph{relative likelihood} of an incorrect hypothesis compared with the true one. Therefore, the DER can be equivalently written as
\begin{equation}
\label{Pfail2}
\text{DER} = 1 - \Pr\big( D_i < 0,\ \forall i \neq k \big).
\end{equation}
Computing this probability requires the joint distribution of the set $\{D_i\}_{i \neq k}$. However, the variables $D_i$ are mutually dependent because each term includes the common term $T_k$. In addition, the test statistics $T_k$ and $T_i$ are quadratic forms of CSCG noise, and thus follow non-Gaussian generalized chi-square–type distributions. As a result, the joint cumulative distribution function (CDF) involves high-dimensional integrals over dependent, non-Gaussian random variables, rendering an exact closed-form expression for the DER analytically intractable. Nevertheless, the DER metric can be evaluated numerically through Monte-Carlo simulations.

\begin{remark}
If we approximate $T_k$ and $T_i$ as Gaussian distributed, then with both mean and variance specified, the distributions would be fully characterized, allowing a closed-form approximation of the anonymity metric. In this case, for each non-transmitting user $i$, the difference variable $D_i = T_i - T_k$ would also be Gaussian, and quantities such as $ \Pr(D_i > 0) = 1- \Pr(D_i <0)$ could be directly evaluated via the Gaussian CDF. If the  variables $\{D_i\}_{i \neq k}$ were further assumed independent, the overall anonymity metric can be expressed as $1- \prod_{i \neq k} \Pr(D_i < 0)$.

However, this approach relies on several strong assumptions and approximations. In practice, these assumptions introduce a considerable gap between the theoretical estimate and the actual behavior in our scenario. $\| \mathbf{\widehat{Y}}_k - \mathbf{Y} \|_F^2$ and $\| \mathbf{\widehat{Y}}_i - \mathbf{Y} \|_F^2$ in test statistics, and consequently $D_i$, are non-Gaussian because they involve quadratic forms of CSCG noise, and the variables $\{D_i\}_{i\neq k}$ are mutually dependent through the shared term $T_k$. These issues make it impossible to obtain the joint distribution of $\{D_i\}_{i \neq k}$ in closed form, making a closed-form evaluation of DER analytically intractable.  
\end{remark}

\subsection{Anonymity Metric: Information-theoretic Metric}
To gain analytical insight into the factors influencing anonymity, we adopt an information-theoretic perspective.
Specifically, we examine the KLD~\cite{cover1999elements} between the distributions of the received signal $\mathbf{Y}$ under the true hypothesis $p_k \triangleq p(\mathbf{Y}; \mathbf{\widehat{W}}_k,\mathcal{H}_k)$ and  false hypothesis $p_i \triangleq p(\mathbf{Y}; \mathbf{\widehat{W}}_i,\mathcal{H}_i)$, defined as
\begin{equation}
\label{KLdef}
D_{\mathrm{KL}}(p_k \|\; p_i) \;=\; \operatorname{E}\left[ \ln \frac{p_k(\mathbf{Y})}{p_i(\mathbf{Y})}\right]
\;=\; \int p_k(\mathbf{Y}) \ln \frac{p_k(\mathbf{Y})}{p_i(\mathbf{Y})}\, d\mathbf{Y},
\end{equation}
where the expectation is taken with respect to (w.r.t.) the true distribution $p_k(\mathbf{Y})$ under hypothesis $\mathcal{H}_k$. The KLD quantifies how distinguishable the received signals are between the two cases, i.e., a larger KLD indicates that the two distributions $p_k$ and $p_i$ are more distinguishable, making correct detection easier for the receiver. In the following, we present Lemma \ref{lemma_kld} to analytically quantify the KLD in our anonymous communication system.

\begin{lemma}\label{lemma_kld}
The KLD between the distributions $p_k$ and $p_i$, defined in \eqref{KLdef}, admits the following closed-form expression:
\begin{align}\label{kld}
    D_{\mathrm{KL}}(p_k \|\; p_i) &= \operatorname{E}[T_k]-\operatorname{E}[T_i] = N_r L \ln\left(\frac{\sigma_i^2}{\sigma_k^2}\right) \nonumber\\
    &+\left( 1 - \frac{\sigma_k^2}{\sigma_i^2}\right) \left(N_s N_t - N_r L\right)\nonumber\\
    & + \frac{1}{\sigma_i^2}\Big[ \operatorname{tr}\Big(\mathbf{S}^H \mathbf{W}_k^H \mathbf{H}_k^H \mathbf{H}_k \mathbf{W}_k \mathbf{S}\Big)\nonumber\\ 
    &-\operatorname{tr}\left(\mathbf{S}^H \mathbf{W}_k^H \mathbf{H}_k^H \mathbf{H}_i \mathbf{H}_i^{\dagger}\mathbf{H}_k \mathbf{W}_k \mathbf{S}\right)\Big].
\end{align}
\end{lemma}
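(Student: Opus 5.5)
The plan is to take the first equality in \eqref{kld} directly from the definition \eqref{KLdef}. Since $p_k$ and $p_i$ are the generalized likelihoods in \eqref{pdfs}, $\ln(p_k/p_i)=T_k-T_i$. Taking the expectation under the true model $\mathbf{Y}=\mathbf{H}_k\mathbf{W}_k\mathbf{S}+\mathbf{N}_k$ then gives $\operatorname{E}[T_k]-\operatorname{E}[T_i]$. Using \eqref{test_stats}, the logarithmic normalization terms immediately give $N_rL\ln(\sigma_i^2/\sigma_k^2)$. What remains is to compute the two expected squared residuals $\operatorname{E}\|\mathbf{Y}-\widehat{\mathbf{Y}}_k\|_F^2$ and $\operatorname{E}\|\mathbf{Y}-\widehat{\mathbf{Y}}_i\|_F^2$.

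The key structural tool is the linear map $\mathcal{P}_j(\mathbf{X})=\mathbf{P}_j\mathbf{X}\mathbf{Q}$, where $\mathbf{P}_j=\mathbf{H}_j\mathbf{H}_j^\dagger$ and $\mathbf{Q}=\mathbf{S}^\dagger\mathbf{S}$. I will assume $\mathbf{H}_j$ has full column rank $N_t$ and $\mathbf{S}$ has full row rank $N_s$, which is needed for the stated pseudo-inverses to exist. Then $\mathbf{P}_j$ and $\mathbf{Q}$ are Hermitian idempotent of ranks $N_t$ and $N_s$. Hence $\mathcal{P}_j$ is an orthogonal projection on $\mathbb{C}^{N_r\times L}$ with the Frobenius inner product (it is $\mathbf{Q}^T\otimes\mathbf{P}_j$ in vectorized form), of rank $N_tN_s$.

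By \eqref{Yk_hat}, $\mathbf{Y}-\widehat{\mathbf{Y}}_k=(\mathcal{I}-\mathcal{P}_k)(\mathbf{N}_k)$. For white CSCG noise $\mathbf{N}_k$, the expectation of $\|(\mathcal{I}-\mathcal{P}_k)\mathbf{N}_k\|_F^2$ equals $\sigma_k^2\operatorname{tr}(\mathcal{I}-\mathcal{P}_k)=\sigma_k^2(N_rL-N_tN_s)$. This yields $\operatorname{E}[T_k]=-N_rL\ln(\pi\sigma_k^2)-(N_rL-N_tN_s)$.

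By \eqref{Yi_hat}, $\mathbf{Y}-\widehat{\mathbf{Y}}_i=(\mathcal{I}-\mathcal{P}_i)(\mathbf{H}_k\mathbf{W}_k\mathbf{S}+\mathbf{N}_k)$. The cross term between signal and noise has zero mean. The noise part contributes $\sigma_k^2(N_rL-N_tN_s)$ by the same trace argument. For the signal part, I use $\mathbf{S}\mathbf{S}^\dagger\mathbf{S}=\mathbf{S}$, which gives $\mathcal{P}_i(\mathbf{H}_k\mathbf{W}_k\mathbf{S})=\mathbf{P}_i\mathbf{H}_k\mathbf{W}_k\mathbf{S}$. The deterministic residual is therefore $(\mathbf{I}-\mathbf{P}_i)\mathbf{H}_k\mathbf{W}_k\mathbf{S}$. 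Its squared norm, using idempotence and Hermitian symmetry of $\mathbf{I}-\mathbf{P}_i$, is $\operatorname{tr}(\mathbf{S}^H\mathbf{W}_k^H\mathbf{H}_k^H\mathbf{H}_k\mathbf{W}_k\mathbf{S})-\operatorname{tr}(\mathbf{S}^H\mathbf{W}_k^H\mathbf{H}_k^H\mathbf{H}_i\mathbf{H}_i^\dagger\mathbf{H}_k\mathbf{W}_k\mathbf{S})$.

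Dividing by $\sigma_i^2$ and subtracting $\operatorname{E}[T_i]$ from $\operatorname{E}[T_k]$, the noise terms combine as $-(N_rL-N_tN_s)+\frac{\sigma_k^2}{\sigma_i^2}(N_rL-N_tN_s)=\big(1-\frac{\sigma_k^2}{\sigma_i^2}\big)(N_sN_t-N_rL)$. This gives \eqref{kld}.

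The main obstacle is not the algebra but justifying the two structural facts. The first is that the two-sided projection $\mathbf{X}\mapsto\mathbf{P}\mathbf{X}\mathbf{Q}$ is an orthogonal projection of rank $N_tN_s$, which gives the trace counts. The second is that the signal is invariant under right-multiplication by $\mathbf{Q}$. Both rest on the full-rank assumptions on $\mathbf{H}_j$ and $\mathbf{S}$, which I will state explicitly.
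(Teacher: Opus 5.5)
Your proposal is correct and follows essentially the same route as the paper's proof in Appendix~\ref{Appendix_A}. Both identify $D_{\mathrm{KL}}(p_k\|p_i)$ with $\operatorname{E}[T_k]-\operatorname{E}[T_i]$ under the true model, reduce everything to the expected residual norms $\operatorname{E}[F_k]=\sigma_k^2(N_rL-N_tN_s)$ and $\operatorname{E}[F_i]=\sigma_k^2(N_rL-N_tN_s)+\Delta$, and then combine. Your view of $\mathbf{X}\mapsto\mathbf{P}_j\mathbf{X}\mathbf{Q}$ as a rank-$N_tN_s$ orthogonal projection ($\mathbf{Q}^T\otimes\mathbf{P}_j$ after vectorization) is a cleaner packaging of the paper's explicit trace expansion, and your explicit full-rank assumptions on $\mathbf{H}_j$ and $\mathbf{S}$ state what the paper uses only implicitly.
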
 
\begin{proof}
    See Appendix \ref{Appendix_A}.
\end{proof}

Recalling that $D_i = T_i - T_k$, the KLD in Lemma~\ref{lemma_kld} can be written as $D_{\mathrm{KL}}(p_k\|p_i) = -\,\operatorname{E}[D_i]$. Thus, a larger expected value $\operatorname{E}[D_i]$ corresponds to a smaller KLD, meaning that the two hypotheses become less distinguishable.  If we define each user other than the true transmitter as an alternative user, then the pairwise anonymity between the true transmitter and any alternative user $i$ can be given by $\operatorname{E}[D_i] $.

\subsection{Transmitter Anonymity Constraint}
\label{sec:trans_anony}
In this subsection, we quantify an anonymity constraint for the transmitter detection strategy.

Here,  we refer to the non-transmitting users (indexed by $\forall i\ne k$) as the \emph{alias} users whose channels are intentionally exploited by the transmitter to disguise the true transmission. We first define the alias-candidate set for the \order{k} user as $\mathcal{A}^{\mathrm{cand}}_k = \{1,\ldots,K\}\setminus\{k\}$, which contains all non-transmitting users whose channels may be exploited for anonymity. From the candidate set $\mathcal{A}^{\mathrm{cand}}_k$, the transmitter selects a subset $\mathcal{A} \subseteq \mathcal{A}^{\mathrm{cand}}_k$ as the index set of the selected alias users, where $|\mathcal{A}| = a, 1\le a\le K-1$, represents the number of aliases considered by the receiver for transmitter detection. We note that the receiver selects the user corresponding to the largest test statistic $T_i$. To achieve pairwise anonymity between the true user and each alias, the distribution of the received signal $\mathbf{Y}$ under the alias hypothesis $\mathcal{H}_i$ should closely resemble its distribution under the true hypothesis $\mathcal{H}_k$. From Lemma \ref{lemma_kld}, we see that the pairwise anonymity can be quantified by the negative KLD between $p_k(\mathbf{Y})$ and $p_i(\mathbf{Y})$. To enhance anonymity, one can minimize this divergence, which is equivalent to maximizing its negative value, $\operatorname{E}[D_i]=\operatorname{E}[T_i] -\operatorname{E}[T_k]$. To achieve this aim, $\operatorname{E}[D_i]$ can be expressed as
 \begin{equation}\label{EDi}
        \operatorname{E}[D_i] = E_{\text{const}}- \frac{1}{\sigma_i^2} \Delta, 
\end{equation}
where
\begin{align}
  E_{\text{const}} =  N_r L \ln\left(\frac{\sigma_k^2}{\sigma_i^2}\right) +\left( 1 - \frac{\sigma_k^2}{\sigma_i^2}\right) \left( N_r L - N_s N_t \right)
\end{align}
and
\begin{equation}
    \label{eq:delta}
\begin{split}
     \Delta &= \operatorname{tr}\left(\mathbf{S}^H \mathbf{W}_k^H \mathbf{H}_k^H \mathbf{H}_k \mathbf{W}_k \mathbf{S}\right) \\
     &- \operatorname{tr}\left(\mathbf{S}^H \mathbf{W}_k^H \mathbf{H}_k^H \mathbf{H}_i \mathbf{H}_i^{\dagger}\mathbf{H}_k \mathbf{W}_k \mathbf{S}\right).
\end{split}    
\end{equation}

It follows directly from \eqref{EDi} that maximizing the pairwise anonymity measure $\operatorname{E}[D_i]$ is equivalent to minimizing the trace-difference term $\Delta$, which can be rewritten as
\begin{equation}
    \label{constraint}
    \begin{split}
        \Delta &= \operatorname{tr}\left(\mathbf{S}^H \mathbf{W}_k^H \mathbf{H}_k^H (\mathbf{I}_{N_r}-\mathbf{H}_i \mathbf{H}_i^{\dagger})\mathbf{H}_k \mathbf{W}_k \mathbf{S}\right)\\
        & \stackrel{\text{(a)}}{=}\operatorname{tr}\left(\mathbf{S}^H \mathbf{W}_k^H \mathbf{H}_k^H (\mathbf{I}_{N_r}-\mathbf{H}_i \mathbf{H}_i^{\dagger})^2 \mathbf{H}_k \mathbf{W}_k \mathbf{S}\right),\\
        & \stackrel{\text{(b)}}{=} \| (\mathbf{I}_{N_r} - \mathbf{H}_i \mathbf{H}_i^{\dagger})\mathbf{H}_k\mathbf{W}_k  \mathbf{S} \|_F^2.
    \end{split}
\end{equation}
where $(a)$ follows from the idempotent property of the projection matrix $\mathbf{I}_{N_r}-\mathbf{H}_i \mathbf{H}_i^{\dagger}$ and $(b)$ follows from the identity $\operatorname{tr}(\mathbf{A}^H \mathbf{A}) = \|\mathbf{A}\|_F^2$.

Since $\Delta \ge 0$, driving this term toward zero provides the most favorable condition for improving anonymity, as it minimizes the distinguishability between the true and alias hypotheses. However, enforcing $\Delta = 0$ is overly restrictive and would impose a severe constraint on the precoder, compromising communication reliability. To balance anonymity with communication quality, we introduce a small relaxation parameter $\epsilon$ to allow a controlled, non-zero trace-difference, with smaller values enforcing stronger anonymity. The anonymity constraint at the symbol level is thus given by
\begin{equation}\label{eq:cst_anonym}
\boxed{\big\|(\mathbf{I}_{N_r} - \mathbf{H}_i \mathbf{H}_i^{\dagger}) \mathbf{H}_k \, \mathbf{W}_k^{\ell} \mathbf{s}^{\ell} \big\|_2^2 \;\le\; \epsilon,
\quad \forall i\ne k, \,\ell}
\end{equation}
where $\mathbf{s}^{\ell}$ denotes the transmit symbol vector of the \order{k} user at timeslot~$\ell$. 

\begin{remark}
The condition in~\eqref{eq:cst_anonym} follows the theoretical definition of pairwise anonymity, which is formulated w.r.t. any non-transmitting user ($\forall i\ne k$). This definition specifies how pairwise anonymity is ensured between the true user and an arbitrary non-transmitting user, but does not require that the constraint be enforced for all such users in practice. Since imposing all $K\!-\!1$ pairwise constraints would overly restrict the precoding space and noticeably degrade communication reliability, the subsequent design enforces this constraint only on a selected alias subset $\mathcal{A}$.
\end{remark}

\section{Anonymous Communication Design}\label{sec:anony_design}

In an anonymous communication framework, despite being aware of the set of all CSI, the receiver must not be able to associate the received signal with the channel of the actual transmitter. To enable user-agnostic signal combining and reliable demodulation, we employ a constructive-interference (CI) precoder at the transmitter. The CI precoder pre-aligns the transmitted symbol phases so that, after spatial combining, the resulting received signal can be directly demodulated without requiring any CSI-dependent equalization at the receiver. In the following, we first review the CI precoding principles (see \cite{masouros2015exploiting} and references therein for further details) and then present our receiver combining scheme and anonymous precoder design tailored for the considered system.

\subsection{Constructive Interference Precoder}\label{sec:precoder}

In conventional systems, phase equalization is essential for exploiting spatial diversity in the sense that without aligning the signal phases across antennas, the received components cannot be coherently combined. However, in the anonymous setting, the receiver is not allowed to design a CSI-based equalizer to perform such phase alignment. Specifically, in the strong-receiver scenario  ($N_r > N_t$), where the transmitter has limited spatial DoF, the transmitter cannot independently control the phase of each receive antenna. By pre-adjusting the transmitted signal phases, the CI precoder fulfills this role at the transmitter side, ensuring that the post-combined received symbols for each data stream remain correctly aligned for demodulation while preserving user anonymity. In other words, the CI precoding aims to actively shape the multi-stream interference inasmuch as it reinforces, rather than distorts, the desired signal components. Instead of suppressing interference, the precoder exploits the known symbol phases and the available CSI to steer the superimposed signal at the receiver toward the constructive region of the constellation. As a result, the post-combined received symbols are naturally aligned with their target directions, enhancing detection reliability without requiring CSI-based phase equalization. This approach effectively converts interference power into useful signal energy, improving both energy efficiency and noise resilience. Therefore, the CI precoding plays a crucial role in enabling anonymous communication under the strong-receiver scenario ($N_r >N_t$). In this work, we will design the appropriate CI precoder w.r.t. a new receiver combiner structure, introduced in the sequel, for anonymous communication.


\subsection{Partitioned Equal-Gain Combiner}\label{sec:pegc}

Conventional combining schemes such as a maximum-ratio combiner, widely used in non-anonymous MIMO settings, are not suitable for anonymous communications, because they inherently rely on user-specific CSI. In light of this, we propose a \emph{partitioned equal-gain combining} (P-EGC) scheme, which is applied at the receiver to exploit spatial diversity without relying on user-specific CSI. Specifically, the key idea of our P-EGC scheme is to partition receive antennas into several groups according to the number of transmitted data streams, then combine the signals received within each group corresponding to the same data stream with equal weights, while avoiding any combining across different antenna groups. This approach enables the receiver to benefit from spatial diversity in a user-agnostic manner, without relying on the transmitter's identity or channel characteristics.

The detailed combining process can be described as follows.  For any arbitrary number of receive antennas $N_r$, the antennas are divided into $N_s$ disjoint subsets such that $N_r > N_s$\footnote{
Since $N_s$ data streams are sent using $N_t$ transmit antennas, in the simplest case ($N_s = 1$), where the transmitter sends only one data stream, the receiver naturally achieves full diversity without the need to partition the receive antennas into subsets. As such, the precoder essentially functions as a vector ($\mathbf{w}_k \in \mathbb{C}^{N_t \times 1}$). However, considering multiplexing gain at the transmitter, the receiver must partition the receive antennas into subsets based on the number of multiplexed data streams. If the number of receive antennas is not an integer multiple of the number of streams, the antenna allocation among data streams may vary slightly, but this does not affect our design process.}. Let $\mathcal{S}_m$ denote the subset of receive antennas assigned to the \order{m} data stream with $|\mathcal{S}_m|$ being its cardinality. We denote $y_{k,j}^{\ell}$ as the received signal at the \order{j} antenna, where $j \in \mathcal{S}_m$, from the \order{k} user at timeslot $\ell$, which can be expressed as
\begin{equation}\label{ykj}
y_{k, j}^{\ell} = \mathbf{h}_{k, j} \sum_{n=1}^{N_s} \mathbf{w}_{k, n}^{\ell} s_{k, n}^{\ell} + n_{k, j}^{\ell},
\end{equation}
where $\mathbf{h}_{k,j} \in \mathbb{C}^{1 \times N_t}$ is the channel vector from the \order{k} transmitter to the \order{j} receive antenna in $\mathcal{S}_m$, and $\mathbf{w}_{k,n}^{\ell} \in \mathbb{C}^{N_t \times 1}$ is the precoder for the \order{n} data stream at timeslot $\ell$.
Let $s_{k,n}^{\ell} \in \mathbb{C}$ denote the symbol of the \order{n} data stream transmitted at timeslot $\ell$ and $n_{k,j}^{\ell} \sim \mathcal{CN}(0, \sigma_k^2)$ denote the noise in this received signal at the \order{j} receive antenna. Applying the P-EGC to all the received signals from $\mathcal{S}_m$, the combined signal can be expressed as 
\begin{align}\label{r_km}
    r_{k, m}^{\ell} &= \sum_{j \in \mathcal{S}_m} y_{k,j}^{\ell}\nonumber\\ 
    &= \mathbf{c}_{k,m} \mathbf{H}_{k,m} \mathbf{W}_{k}^{\ell} \mathbf{s}^{\ell} + \tilde{n}_{k,m}^{\ell},\quad\forall k, m, \ell,
\end{align}
where $\mathbf{H}_{k,m} = 
\begin{bmatrix}
\,\mathbf{h}_{k,j} \,
\end{bmatrix}_{j \in \mathcal{S}_m}
\in \mathbb{C}^{|\mathcal{S}_m| \times N_t}$  is the submatrix of the full channel $\mathbf{H}_k$ corresponding to antennas in $\mathcal{S}_m$, and $\mathbf{c}_{k,m} \in \mathbb{C}^{ 1 \times |\mathcal{S}_m|}$ denotes the equal gain combiner for the \order{m} data stream, which is assumed to be an all-ones row vector, i.e., $\mathbf{c}_{k,m} = \mathbf{1}^{ 1 \times |\mathcal{S}_m|}$.  In addition, $\mathbf{W}_k^\ell
= \begin{bmatrix} \mathbf{w}_{k,1} & \mathbf{w}_{k,2} & \cdots & \mathbf{w}_{k,N_s} \end{bmatrix}
\in \mathbb{C}^{N_t \times N_s}$ is the precoder for all data streams at timeslot $\ell$, $\mathbf{s}^{\ell}$ denotes the transmit symbol vector across all data streams,  and $\tilde{n}_{k,m}^{\ell}\triangleq \mathbf{c}_{k,m}\mathbf{n}_{k, m}^{\ell}$ indicates the aggregated noise after combining, where $\mathbf{n}_{k,m}^{\ell} = [\, n_{k,j}^{\ell} \,]_{j \in \mathcal{S}_m} 
\in \mathbb{C}^{|\mathcal{S}_m| \times 1}$ is the noise vector corresponding to the antenna subset~$\mathcal{S}_m$. Each $n_{k,j}^{\ell}$ is modeled as an independent and identically distributed (i.i.d.) CSCG random variable. Thus, the effective noise for each data stream follows $\tilde{n}_{k,m}^{\ell}\sim\mathcal{CN}(0,\sigma_{\mathrm{eff}}^2)$, such that
\[
    \sigma_{\mathrm{eff}}^2=\operatorname{E}(|\tilde{n}_{k,m}^{\ell}|^2)
    = \|\mathbf{c}_{k,m}\|_2^2 \sigma_k^2
    = |\mathcal{S}_m|\, \sigma_k^2.
\]

For all $N_s$ data streams, the P-EGC can be expressed in a compact block-diagonal form as
\begin{equation}
\label{eq:comb_matrix}
    \mathbf{C}_k = 
    \begin{bmatrix}
    \mathbf{c}_{k,1} & 0 & \cdots & 0 \\
    0 & \mathbf{c}_{k,2} & \cdots & 0 \\
    \vdots & \vdots & \ddots & \vdots \\
    0 & 0 & \cdots & \mathbf{c}_{k,N_s}
    \end{bmatrix},
\end{equation}
where each diagonal element is a row vector of all-ones, representing the combiner for the subset of receive antennas for each data stream. The size of these combiners depends on the number of antennas in each group. For example, for the \order{m} data stream,  $ |\mathcal{S}_m|$ antennas are allocated for reception. Thus, the overall combiner $\mathbf{C}_k$ is a matrix of size $N_s \times N_r$. Since the P-EGC depends only on the antenna grouping, $\mathbf{C}_k$ does not vary with the timeslot index~$\ell$. As a result, based on the symbol-level precoder $\mathbf{W}_k^{\ell}$, the post-combined received signal vector at timeslot~$\ell$ can be expressed as 
\begin{equation}
    \label{combined signal}
    \begin{split}
        \Tilde{\mathbf{r}}_k^{\ell} &= \mathbf{C}_k \left(\mathbf{H}_k \mathbf{W}_k^{\ell} \mathbf{s}^{\ell} + \mathbf{n}_k^{\ell}\right),\quad\forall k, \ell,
    \end{split}
\end{equation}
where $\Tilde{\mathbf{r}}_k^{\ell} = [\, r_{k,1}^{\ell}, \ldots, r_{k,N_s}^{\ell} \,]^T \in \mathbb{C}^{ N_s \times 1}$ is the post-combined outputs of all streams and $\mathbf{n}_k^{\ell} = [\, n_{k,1}^{\ell}, \ldots, n_{k,N_r}^{\ell} \,]^T 
\in \mathbb{C}^{N_r \times 1}$ is the received noise vector.

\subsection{Anonymous Precoding Requirements}\label{sec:precoding}
In this subsection, we specify the CI requirement used in the anonymous precoder design. By combining the CI principle with the P-EGC model introduced earlier, we express the geometric condition that the post-combined received symbol must satisfy in order to achieve constructive interference, which will be used as a constraint in the subsequent design problem.

\begin{figure}
	\centering
\includegraphics[width=0.9\linewidth]{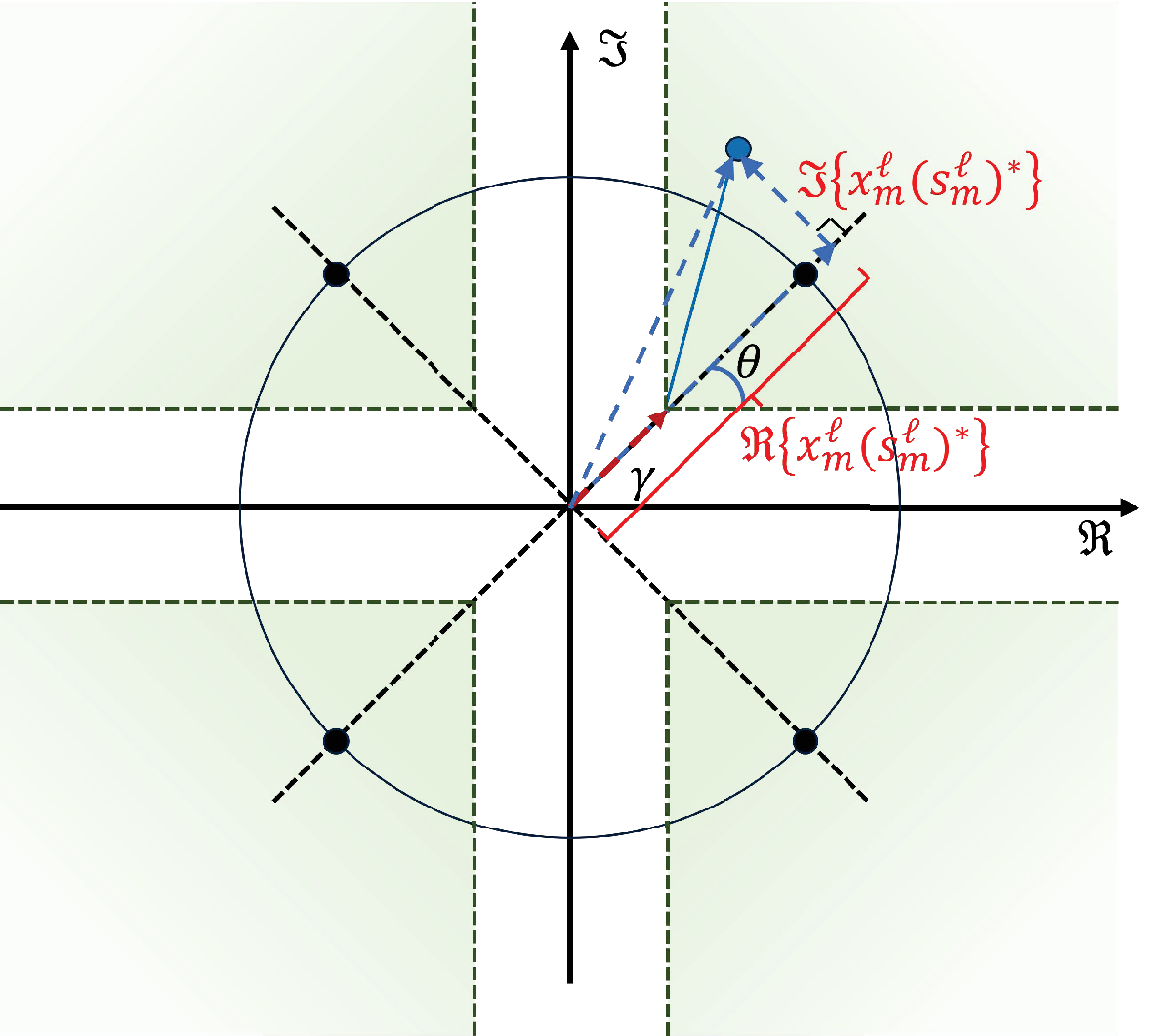}
	\caption{Illustration of CI precoding for QPSK modulation. The green area represents the constructive decision region and the black points denote the reference constellation symbols. The parameter $\gamma$ indicates the minimum in-phase distance from the origin. The blue point corresponds to a received symbol $x_m^{\ell}$ on stream~$m$ at timeslot~$\ell$, with $s_m^{\ell}$ denoting the corresponding target constellation symbol.  After rotating the point by the negative phase of the first-quadrant reference symbol, the rotated point lies in the constructive region if the ratio between its imaginary and real components remains within the $\pm\tan \theta$ boundaries, where $\theta = 45^\circ$ for QPSK.
	}
	\label{fig:CI}
\end{figure}

In this work, we consider the quadrature phase shift keying (QPSK) modulation for the signal vector as a representative example of constellations commonly adopted in symbol-level CI precoding studies~\cite{masouros2015exploiting,wei2021fundamentals,wei2019multi,wei2020secure}. In QPSK, each transmitted symbol lies in one of the four quadrants of the complex plane, and the decision boundaries are the real and imaginary axes. 
The goal of CI precoding is to ensure that the post-combined received signal of each data stream is pushed deeper into the correct quadrant, away from these decision boundaries\footnote{Although our work specifically considers the QPSK modulation, the CI precoding framework can be generalized to higher-order constellations such as M-QAM and M-PAM, by redefining the constructive regions around constellation points~\cite{li2020interference, wei2024sub}. Such extensions are left as future works.}.

Following our P-EGC receiver model, outlined in Section \ref{sec:pegc}, the combined received signal \eqref{r_km} can be rewritten as
\begin{equation}
r_{k, m}^{\ell} = \pmb{\alpha}_{k,m}^{\ell}\mathbf{s}^{\ell} + \tilde{n}_{k,m}^{\ell}, \quad\forall k, m, \ell,
\end{equation}
where $\pmb{\alpha}_{k,m}^{\ell}\triangleq\mathbf{c}_{k,m} \mathbf{H}_{k,m} \mathbf{W}_{k}^{\ell}$. 
To ensure correct detection, the signal component $\pmb{\alpha}_{k,m}^{\ell}\mathbf{s}^{\ell}$ (ignoring the noise) must lie within the decision region.
Geometrically, for a QPSK symbol $s_{k,m}^{\ell}$ located on the unit circle with phase $\phi_{k,m}^{\ell}$, the constructive region corresponds to the angular sector $\phi_{k,m}^{\ell}\pm\theta$, where $\theta=\pi/4$. In the CI scheme, the constellation is properly rotated so that the desired symbol lies on the positive real axis. This can be equivalently achieved by multiplying the noise-less term $\pmb{\alpha}_{k,m}^{\ell}\mathbf{s}^{\ell}$ with the complex-conjugate of the symbol, subject to the following conditions:
\begin{subequations}\label{ci_cst}
\begin{align}
&\Re\left\{\pmb{\alpha}_{k,m}^{\ell}\mathbf{s}^{\ell}(s_{k,m}^{\ell})^*\right\}\ge\gamma,\quad\forall k, m,\label{cst1}\\
&\big|\Im\left\{\pmb{\alpha}_{k,m}^{\ell}\mathbf{s}^{\ell}(s_{k,m}^{\ell})^*\right\}\big| \nonumber\\
& \hspace{7mm}\!\le\! \tan\theta \big(\Re\left\{\pmb{\alpha}_{k,m}^{\ell}\mathbf{s}^{\ell}(s_{k,m}^{\ell})^*\right\}\!-\!\gamma \big),\quad\forall k, m,\label{cst2}  
\end{align}
\end{subequations}
where $\gamma>0$ denotes the required minimum in-phase distance from the origin (a.k.a. constructive margin, which is proportional to the Euclidean decision margin), representing the guaranteed constructive amplitude of the received symbol. The real part in \eqref{cst1} corresponds to the in-phase projection of the rotated received symbol, i.e., the component that pushes the symbol away from the decision boundaries and thus contributes to reliable detection. 
Meanwhile, \eqref{cst2} confines the symbol within the constructive angular sector. By noting that the right-hand side of \eqref{cst2} is nonnegative only when $\Re\{\cdot\} \ge \gamma$,  we find that \eqref{cst1} is satisfied when \eqref{cst2} holds.  
Therefore, the two inequalities can be combined into a single constraint that jointly captures both the IQ requirements as
\begin{equation}\label{ci_cst_single}
\boxed{\frac{\big|\Im\left\{\pmb{\alpha}_{k,m}^{\ell}\mathbf{s}^{\ell}(s_{k,m}^{\ell})^*\right\}\big|}
{\Re\left\{\pmb{\alpha}_{k,m}^{\ell}\mathbf{s}^{\ell}(s_{k,m}^{\ell})^*\right\}-\gamma} \le \tan\theta,\quad\forall k, m.}
\end{equation}
Constraint \eqref{ci_cst_single} ensures that the post-combined received symbol $\mathbf{r}_{k, m}^{\ell}=\pmb{\alpha}_{k,m}^{\ell}\mathbf{s}^{\ell}$ lies in the constructive region of its target constellation point, henceforth turning multi-stream interference into beneficial one for detection performance improvement. 

An illustrative example of the above CI precoding applied to QPSK is provided in Fig.~\ref{fig:CI}. In this figure, the blue point represents an example of the received symbol,  denoted by $x_m^{\ell}$.  
The dashed \(45^\circ\) line indicates the direction of the reference constellation point in the first quadrant, along which CI aims to push the received symbol into the constructive region. After rotating the received signal by the negative phase of the reference symbol, this \(45^\circ\) line aligns with the real axis. In this rotated domain, the real part corresponds to the component along the desired symbol direction, whereas the imaginary part represents the deviation from that direction. To ensure that the received symbol lies within the constructive region, the geometric relationship between its real and imaginary parts must satisfy \eqref{ci_cst_single}.

With the CI constraint established, we now turn to how communication reliability is quantified under this framework.
Under the CI precoding framework, multi-stream interference is aligned with the desired symbol direction and thus becomes a useful power contribution. Accordingly, the post-combining signal-to-disturbance ratio (SDR) for the \order{m} data stream of  the \order{k} user can be calculated as
\begin{equation}\label{eq:snr}
    \begin{split}
\Gamma_{k,m} &=\frac{\|\sum_{j \in \mathcal{S}_m} \mathbf{h}_{k,j} (\sum_{n=1}^{N_s} \mathbf{w}_{k,n}^{\ell} \mathbf{s}_{k,n}^{\ell})\|^2}{\|\sum_{j \in \mathcal{S}_m} n_{k, j}^{\ell}\|^2 }\\
&=\frac{\|\mathbf{c}_{k,m}\mathbf{H}_{k,m}\mathbf{W}_k^{\ell} \mathbf{s}^{\ell} \|^2}
    {|\mathcal{S}_m|\sigma_k^2},\quad \forall k, m.
    \end{split} 
\end{equation}
This expression characterizes the resulting post-combining reliability, 
whereas the optimization is carried out w.r.t. the guaranteed threshold \(\Gamma_{\text{th}}\) through the CI margin \(\gamma\).

\begin{algorithm}[H]
\caption{Anonymous Precoding Algorithm}\label{algo}
\begin{algorithmic}[1]
\item[\textbf{Input:}]
  Channel set $\mathbf{H}=\{\mathbf{H}_1,\dots,\mathbf{H}_K\}$, 
  symbol matrix $\mathbf{S}$, 
  transmitting user index $k$, 
  total transmit power $p_{\mathrm{tot}}$, 
  noise variances $\{\sigma_i^2\}_{i=1}^K$,
  number of alias channels $a$, and
  relaxation parameter $\epsilon$.
\State Construct the alias-candidate set $\mathcal{A}^{\mathrm{cand}}_k = \{1,\ldots,K\}\setminus\{k\}$.
\State Randomly select $a$ users from $\mathcal{A}^{\mathrm{cand}}_k$ to form the alias set $\mathcal{A}$.
\State Compute the combining matrix $\mathbf{C}_k$ using \eqref{eq:comb_matrix}.
\State Construct the CI constraint according to~\eqref{ci_cst_single}.
\State Construct the anonymity constraint using the relaxation parameter $\epsilon$ according to~\eqref{eq:cst_anonym}.
\State Solve optimization problem $\mathrm{P2}$ using CVX.
\item[\textbf{Output:}] 
  Precoding matrix $\mathbf{W}_k$, 
  constructive margin $\gamma$, and
  combining matrix $\mathbf{C}_k$.
\end{algorithmic}
\end{algorithm}

\subsection{Design Problem and Solution}
Building on the proposed P-EGC scheme at the receiver and the CI precoding at the transmitter, we now focus on designing anonymous precoders $\{\mathbf{W}_k^{\ell}\}_{\ell=1}^{L}$ to jointly enhance communication reliability and enforce transmitter anonymity. For each timeslot $\ell$, we introduce $\Gamma^{\ell}$ as a \emph{common worst-stream} SDR guarantee, i.e., the SDR of
every data stream is constrained to be no smaller than $\Gamma^{\ell}$. To ensure uniform performance over time, we maximize the worst-timeslot guarantee, which yields the following max--min formulation:

\begin{subequations}\label{eq:P0_maxmin}
\begin{align}
\mathrm{P0}:~ &\max_{\{\mathbf{W}_k^{\ell},\,\Gamma^{\ell}\}_{\ell=1}^{L},\,\Gamma}\ \Gamma\notag\\
\text{s.t. }~
& \frac{\left\|\mathbf{c}_{k,m}\mathbf{H}_{k,m}\mathbf{W}_k^{\ell}\mathbf{s}^{\ell}\right\|^2}
{|\mathcal{S}_m|\sigma_k^2}\ \ge\ \Gamma^{\ell},\quad \forall m,\ \forall \ell, \label{eq:P0_sdr}\\
& \left| \Im \left\{ \mathbf{c}_{k,m}\mathbf{H}_{k,m}\mathbf{W}_k^{\ell}\mathbf{s}^{\ell}(s_{k,m}^{\ell})^* \right\} \right| 
\leq \tan \theta \nonumber\\
&\hspace{-3mm}\times\left( \Re \left\{ \mathbf{c}_{k,m}\mathbf{H}_{k,m}\mathbf{W}_k^{\ell}\mathbf{s}^{\ell}(s_{k,m}^{\ell})^* \right\}
- \sqrt{\Gamma^{\ell}\sigma_{\mathrm{eff}}^2} \right),~\forall m, \ell, \label{eq:P0_ci}\\
&  \left\| \mathbf{W}_k^{\ell} \mathbf{s}^{\ell} \right\|^2_2 \leq  \frac{p_{\mathrm{tot}}}{L}, \quad \forall\ell, \label{eq:P0_pow}\\
&  \| (\mathbf{I}_{N_r} - \mathbf{H}_i \mathbf{H}_i^{\dagger})\mathbf{H}_k\mathbf{W}_k^{\ell}  \mathbf{s}^{\ell} \|_2^2
\leq \epsilon,\quad \forall i\in \mathcal{A},\ \forall \ell, \label{eq:P0_anon}\\
& \Gamma^{\ell} \ge \Gamma,\quad \forall \ell. \label{eq:P0_link}
\end{align}
\end{subequations}
where \eqref{eq:P0_sdr} ensures the SDR constraints for communication reliability, \eqref{eq:P0_ci} enforces CI conditions at the symbol level, ensuring that the received signals lie within the constructive region of the modulation constellation,  \eqref{eq:P0_pow} imposes the power budget for each symbol interval, where $p_{\mathrm{tot}}$ is the total transmit power, and \eqref{eq:P0_anon} incorporates the transmission anonymity constraint, determined in \eqref{eq:cst_anonym}.  We highlight that problem~$\mathrm{P0}$ implements a two-level fairness criterion, guaranteeing uniformly reliable transmission across both data streams $\forall m$ and timeslots $\forall \ell$.

Since all constraints in \eqref{eq:P0_sdr}--\eqref{eq:P0_link} are imposed independently per timeslot, and the
precoder $\mathbf{W}_k^{\ell}$ appears only in constraints indexed by the same $\ell$, the problem $\mathrm{P0}$ admits a per-timeslot decomposition.  Accordingly, for a given timeslot $\ell$, we can solve the following subproblem (dropping the index $\ell$ for notational simplicity):
\begin{subequations}\label{eq:P1}
\begin{align}
\mathrm{P1}:~ &\max_{\mathbf{W}_k,\,\Gamma}\ \Gamma \notag\\
\text{s.t. }~
& \frac{\left\|\mathbf{c}_{k,m}\mathbf{H}_{k,m}\mathbf{W}_k\mathbf{s}\right\|^2}
{|\mathcal{S}_m|\sigma_k^2}\ \ge\ \Gamma,\quad \forall m, \label{eq:P1_sdr}\\
& \left| \Im \left\{ \mathbf{c}_{k,m}\mathbf{H}_{k,m}\mathbf{W}_k\mathbf{s}(s_{k,m})^* \right\} \right| 
\leq \tan \theta \notag\\
&\times\left( \Re \left\{ \mathbf{c}_{k,m}\mathbf{H}_{k,m}\mathbf{W}_k\mathbf{s}(s_{k,m})^* \right\}
- \sqrt{\Gamma\sigma_{\mathrm{eff}}^2} \right),\ \forall m, \label{eq:P1_ci}\\
&  \left\| \mathbf{W}_k \mathbf{s} \right\|^2_2 \leq  \frac{p_{\mathrm{tot}}}{L}, \label{eq:P1_pow}\\
&  \| (\mathbf{I}_{N_r} - \mathbf{H}_i \mathbf{H}_i^{\dagger})\mathbf{H}_k\mathbf{W}_k\mathbf{s} \|_2^2
\leq \epsilon,\quad \forall i\in \mathcal{A}. \label{eq:P1_anon}
\end{align}
\end{subequations}
Let $\Gamma_{\ell}^{\star}$ denote the optimal value of problem $\mathrm{P1}$ for timeslot $\ell$. The optimal solution  to the original problem  $\mathrm{P0}$ is then  expressed as
\begin{equation}\label{eq:Gamma_overall}
\Gamma^{\star} = \min_{\ell\in\{1,\ldots,L\}} \Gamma_{\ell}^{\star},
\end{equation}
with $\mathbf{W}_k^{\ell}$ selected as the optimizer of the corresponding per-timeslot problem.


Problem $\mathrm{P1}$ is non-convex due to the SDR constraints \eqref{eq:P1_sdr} and the CI constraints
\eqref{eq:P1_ci}. To obtain a tractable design, we adopt\footnote{Although problem~$\mathrm{P1}$ can be tackled via successive convex approximation (SCA) \cite{boyd2004convex,mamaghani2022aerial} by iteratively convexifying the nonconvex SDR constraints, such approaches typically converge only to a stationary point with initialization-dependent solution and require an additional bisection/inner-loop procedure to handle  $\sqrt{\Gamma}$ in the CI constraints. In contrast, the CI-margin reformulation in~$\mathrm{P2}$ admits an SOCP representation and can be solved globally; hence, we adopt~$\mathrm{P2}$ as a tractable surrogate for~$\mathrm{P1}$.} a standard CI margin
maximization approach \cite{wei2021fundamentals}. Specifically, we define the (real) CI margin  as
\begin{equation}\label{eq:gamma_def}
\gamma \triangleq \sqrt{\Gamma \sigma_{\mathrm{eff}}^2},
\end{equation}
and directly maximize $\gamma$, which enforces a common lower bound on the in-phase constructive component for all
streams. For QPSK, $|s_{k,m}|=1$, and the CI constraint implies $\Re\{u_m\}\ge \gamma$, where
$u_m\triangleq \mathbf{c}_{k,m}\mathbf{H}_{k,m}\mathbf{W}_k\mathbf{s}(s_{k,m})^*$.
Therefore, $|u_m|\ge \Re\{u_m\}\ge \gamma$, which further yields
$\|\mathbf{c}_{k,m}\mathbf{H}_{k,m}\mathbf{W}_k\mathbf{s}\|^2 = |u_m|^2 \ge \gamma^2 = \Gamma |\mathcal{S}_m|\sigma_k^2,~\forall m$, i.e., \eqref{eq:P1_sdr} is implied by the CI constraints and can  thus be omitted. Consequently, we equivalently solve
\begin{subequations}\label{eq:P2}
\begin{align}
\mathrm{P2}:~ &\max_{\mathbf{W}_k,\,\gamma}\ \gamma \notag\\
\text{s.t. }~
& \left| \Im \left\{ \mathbf{c}_{k,m}\mathbf{H}_{k,m}\mathbf{W}_k\mathbf{s}(s_{k,m})^* \right\} \right|
\leq \tan\theta \nonumber\\
& \left(
\Re \left\{ \mathbf{c}_{k,m}\mathbf{H}_{k,m}\mathbf{W}_k\mathbf{s}(s_{k,m})^* \right\}
-\gamma \right),\ \forall m, \label{eq:constraint1}\\
&  \left\| \mathbf{W}_k \mathbf{s} \right\|^2_2 \leq  \frac{p_{\mathrm{tot}}}{L}, \label{eq:constraint2}\\
&  \| (\mathbf{I}_{N_r} - \mathbf{H}_i \mathbf{H}_i^{\dagger})\mathbf{H}_k\mathbf{W}_k\mathbf{s} \|_2^2
\leq \epsilon,\quad \forall i\in \mathcal{A}. \label{eq:constraint3}
\end{align}
\end{subequations}

 We highlight that problem $\mathrm{P2}$ is convex second-order cone programming (SOCP), due to an affine objective function with linear and second-order cone (SOC) constraints \cite{boyd2004convex}.  The CI constraint in \eqref{eq:constraint1} is 
linear, since the absolute-value inequality can be rewritten as two affine 
inequalities. The power constraint in \eqref{eq:constraint2} and the anonymity constraint in \eqref{eq:constraint3} are SOC constraints of the form $\|\mathbf{A}x\|_2 \le b$. Hence, $\mathrm{P2}$  can be efficiently solved by any convex optimization tool. Algorithm~\ref{algo} summarizes our proposed anonymous precoding design.

\subsection{Complexity  Analysis}

The per-timeslot problem~\eqref{eq:P2} is  SOCP and can be solved using a primal--dual interior-point method (IPM).
The worst-case complexity  of conic programs with SOC and  linear matrix inequality (LMI) constraints using a standard IPM scales as \cite{wang2014outage,wei2021fundamentals}
\begin{equation}
\mathcal{C}
=\mathcal{O}\!\left(\ln\!\frac{1}{\tau}\,\sqrt{\beta }\,
\big(C_{\rm form}+C_{\rm fact}\big)\right),
\end{equation}
where $\tau>0$ denotes the target accuracy and $\beta$ is the barrier parameter, given by
\begin{equation}
\beta=\sum_{j=1}^{p}q_j+2v,
\end{equation}
with $p$ denoting the number of LMI constraints of sizes $\{q_j\}_{j=1}^{p}$, and $v$ denoting the number of SOC constraints. The per-iteration computational costs for forming and factorizing the Newton system are upper bounded by \cite{wang2014outage}
\begin{equation}
\begin{split}
    C_{\rm form}&=n_{\rm var}\sum_{j=1}^{p}q_j^{3}+n_{\rm var}^{2}\sum_{j=1}^{p} q_j^{2}+n_{\rm var}\sum_{j=1}^{v} u_j^{2},\\
    C_{\rm fact}&=n_{\rm var}^{3},
\end{split}
\end{equation}
where $n_{\rm var}$ is the number of decision variables and $u_j$ is the dimension of the \order{j} SOC constraint.

For problem $\mathrm{P2}$, constraint \eqref{eq:constraint1} is equivalent to two affine inequalities per data stream, yielding $2N_s$ scalar linear constraints. These can be treated as $2N_s$ size-one LMIs, i.e., $p=2N_s$ and $q_j=1$. Constraint \eqref{eq:constraint2} contributes one SOC constraint of size $N_t$, and constraint \eqref{eq:constraint3} contributes $|\mathcal{A}| = a$ SOC constraints of size $N_r$. Hence, $v=1+a$ and $\beta = 2N_s + 2(1+a)$. In addition, since $\mathbf{W}_k\in\mathbb{C}^{N_t\times N_s}$ and $\gamma\in\mathbb{R}$, the real-valued problem has
$n_{\rm var}=2N_tN_s+1 $ decision variables. Substituting these quantities yields the worst-case complexity of solving~$\mathrm{P2}$ per timeslot as
\begin{equation}
\begin{split}
\mathcal{C}_{\textnormal{P2}}
&= \mathcal{O}\Big(
\ln\!\frac{1}{\tau}\,
\sqrt{\,2N_s+2(1+a)\,}  \\
& \times \left[2N_s\,n_{\rm var}^{2}
+n_{\rm var}\!\left(2N_s+ N_t^{2}+aN_r^{2}\right)
+n_{\rm var}^{3}
\right]\Big).
\end{split}
\end{equation}



Since the precoder is designed independently across $L$ timeslots, the overall algorithm complexity scales linearly with $L$. Moreover,  since the Newton-system factorization term $n_{\rm var}^{3}$ is dominant in typical MIMO regimes, the total complexity can be approximated as
\begin{align}
\mathcal{C}_{\textnormal{total}}
\approx
\mathcal{O}\!\left(
L\,\ln\!\frac{1}{\tau}\,
\sqrt{2N_s+2(1+a)}\,
(N_tN_s)^3\right).
\end{align}

\section{Simulation Results}\label{sec:simulations}
In this section, we present simulation results to evaluate the detection performance and anonymity behavior of the proposed framework. In each Monte Carlo trial, the index of the active user~$k$ is randomly selected from all~$K$ candidates to ensure statistical fairness. The alias user is then randomly chosen from the remaining $K-1$ candidates to form an alias set with size $a=1$. Unless otherwise specified, the following  parameters are used throughout the simulations: The total number of users is $K=15$, the number of receive antennas is $N_r = 12$, the number of transmit antennas is $N_t = 8$, the number of data streams is $N_s = 4$, and the block length is $L = 30$.  In addition, the noise variance at any non-transmitting user $i \neq k$ is modeled as a random perturbation around that of the true transmitter, given by
\begin{equation}\label{eq:sigma_i}
    \sigma_i^2 = 10^{\beta_i/10}\, \sigma_k^2, \quad \forall i \neq k,
\end{equation}
where $\{\beta_i\}$ are i.i.d. random variables uniformly distributed over $[-d,\, d]$, i.e., $\beta_i \sim \mathcal{U}[-d,\,d]$, representing the relative deviation of each user's noise variance from that of the true transmitter, and  $d$ denotes the half-range of noise variation in the logarithmic domain. We set $d = 1.5$ dB, which introduces moderate randomness yet controlled heterogeneity  in the noise levels. Finally, we clarify that all SNRs in this section are defined as transmit SNRs.


In this section, we do not provide performance comparison with existing precoders, as conventional precoder designs (e.g., SVD/MMSE) provide near-zero anonymity in the considered SNR range. Meanwhile, recently proposed physical-layer anonymous precoders \cite{wei2021fundamentals, wei2023phy, wei2022physical1} rely on fundamentally different system assumptions and formulations (e.g., full-multiplexing without combiners, full-diversity single-stream transmission, or homogeneous-noise closed-form DER analysis). Adapting them to our heterogeneous-noise multiplexing–diversity framework would substantially alter their structure and lead to unfair or impractical comparisons.

\begin{figure}
	\centering
	\includegraphics[width=1\linewidth]{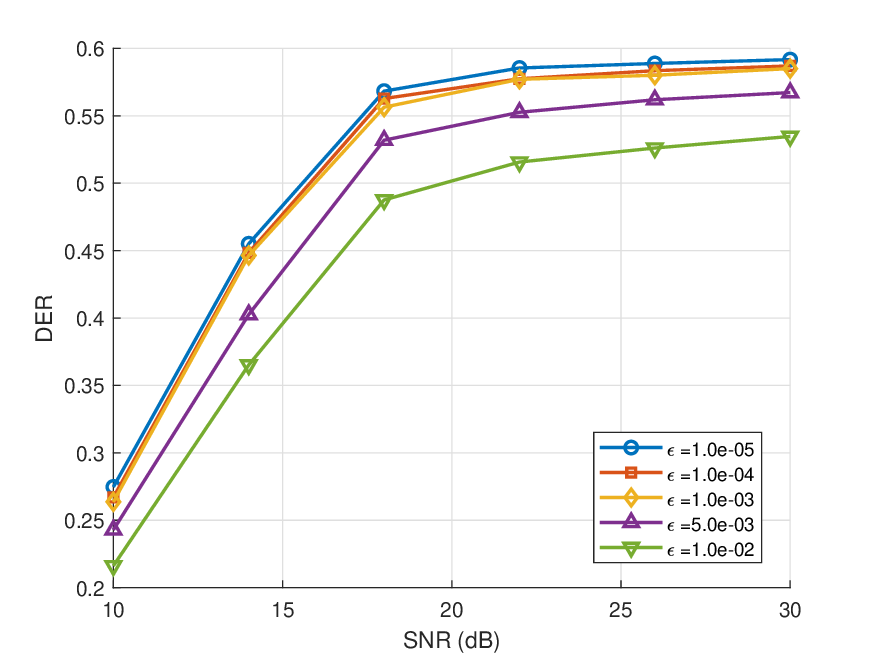}
	\caption{DER vs. SNR under different anonymity relaxations $\epsilon$.}
	\label{DERvsSNR}
\end{figure}

Fig.~\ref{DERvsSNR} shows the DER performance as a function of the SNR for different anonymity constraint relaxation parameters $\epsilon$. Interestingly, the DER increases with the SNR,  indicating that higher transmit power enhances anonymity. This behavior can be explained by examining the expected difference expression $\operatorname{E}[D_i]$ in Lemma \ref{lemma_kld}. Specifically, the noise-variance-related constant term remains essentially independent of the SNR, whereas the signal-dependent trace-difference term grows proportionally with the signal power. Since the proposed anonymous precoder explicitly constrains this signal-dependent component, its relative influence on the test statistic becomes more pronounced at high SNRs, i.e., degrading the receiver's identification capability and increasing the   DER.
Moreover, smaller $\epsilon$ generally imposes tighter anonymity constraints and consistently yields higher DER. However, when $\epsilon$ becomes sufficiently small (e.g., below $10^{-3}$), the curves exhibit saturation, suggesting that further tightening provides negligible additional anonymity improvement.

\begin{figure}
	\centering
	\includegraphics[width=\linewidth]{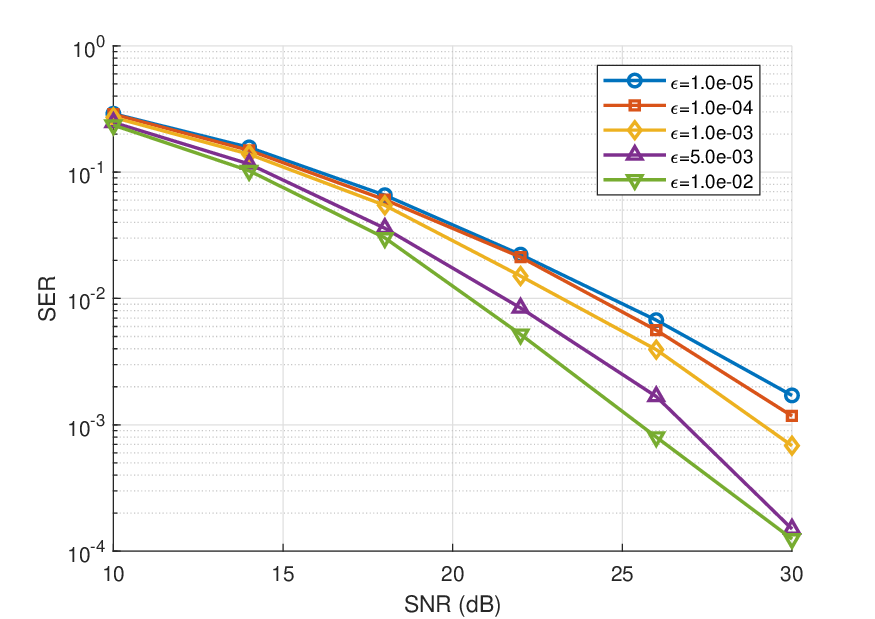}
	\caption{Average symbol error rate (SER) vs. SNR with different $\epsilon$.}
	\label{SERSNR}
\end{figure}

Fig. \ref{SERSNR} illustrates the SER performance of the considered system against the SNR for different anonymity relaxations. As expected, increasing the SNR improves communication reliability for all cases, resulting in a monotonic SER reduction. In particular, for relatively high SNRs, the SER drops to below $10^{-3}$, which is a commonly accepted reliability threshold in communication systems. Moreover, comparing curves with different $\epsilon$ values reveals that looser anonymity constraints (larger $\epsilon$) achieve lower SER, since the anonymity precoder has greater freedom to optimize signal alignment for detection. Conversely, tighter anonymity enforcement restricts the feasible precoding space and slightly degrades reliability. These results clearly demonstrate the inherent tradeoff between anonymity and communication performance. In addition, when the anonymity constraint becomes sufficiently tight (e.g., $10^{-5}$ or even $10^{-4}$), further reduction in $\epsilon$ brings negligible changes to the overall performance, indicating that reliability degradation also exhibits a saturation effect similar to results in Fig.~\ref{DERvsSNR}.



\begin{figure}
	\centering
	\includegraphics[width=\linewidth]
    {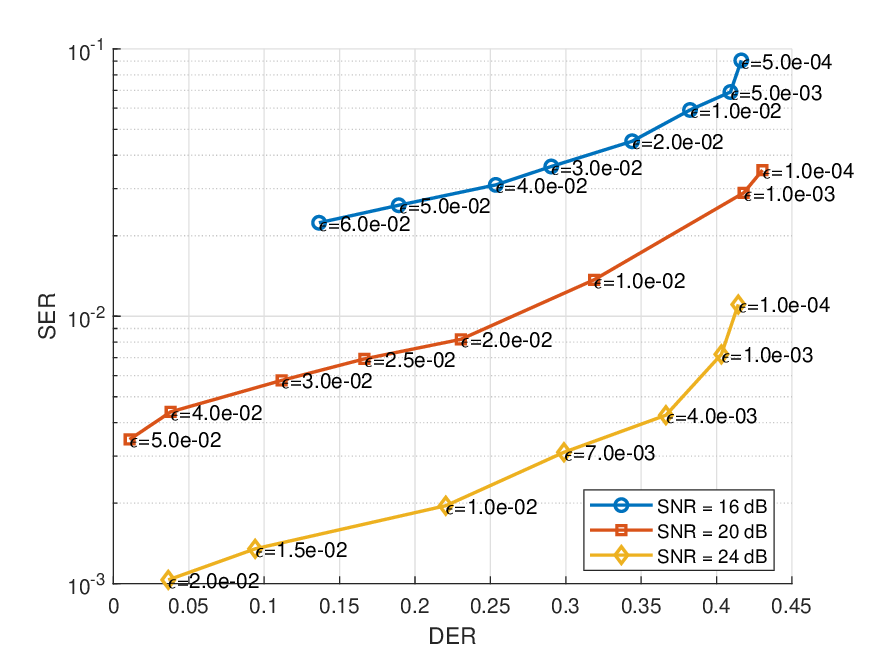}
	\caption{SER-DER tradeoff under different transmit SNRs and $\epsilon$.}
	\label{derser0}
\end{figure}

Fig.~\ref{derser0} illustrates the SER-DER relationship across different transmit SNRs, where each point on the curves corresponds to a different anonymity relaxation parameter $\epsilon$. As $\epsilon$ decreases, i.e., the anonymity constraint becomes tighter (higher DER), the SER increases, implying degraded communication reliability. This SER-DER relationship explicitly demonstrates the cost of anonymity in terms of detection reliability. For all SNR levels, the curves display a consistent monotonic tradeoff pattern between anonymity and reliability. As an illustration at a fixed anonymity level (DER $\approx 0.3$), increasing the SNR from 16 dB to 24 dB reduces the SER from $3.6\times10^{-2} $ to $3.1\times10^{-3}$, corresponding to nearly one order-of-magnitude improvement in reliability. Nevertheless, at higher SNRs, the suitable values of $\epsilon$ become smaller, indicating that the system becomes more sensitive to $\epsilon$.  In particular, when $\epsilon$ becomes very small (e.g., below $10^{-3}$), the SER increases sharply while DER improves only marginally, indicating diminishing anonymity returns. This also suggests that loosening anonymity requirements leads to significant reliability improvement in the tight $\epsilon$ region.




\begin{figure}
	\centering
	\includegraphics[width=1\linewidth]{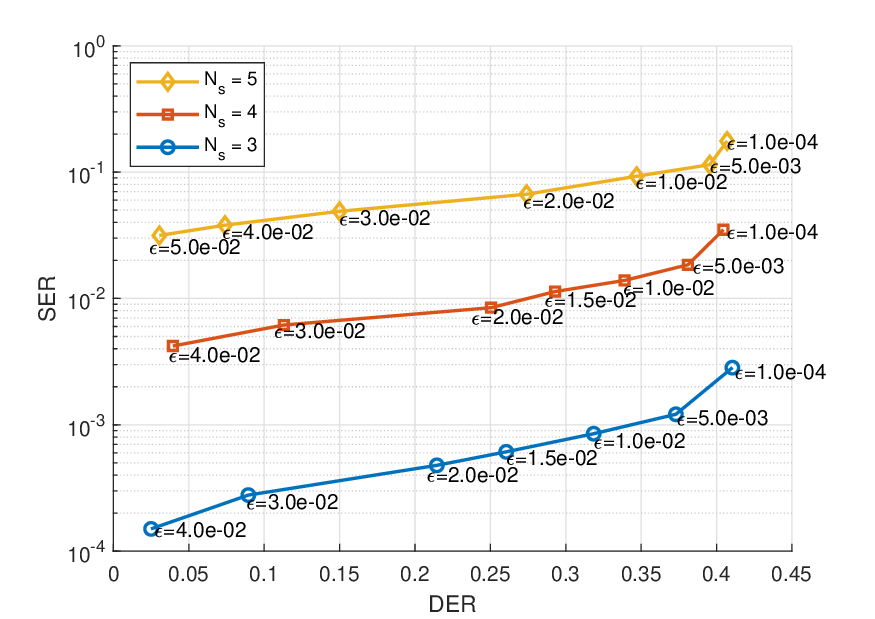}
	\caption{SER-DER tradeoff under different number of data streams $N_s$ and $\epsilon$. The SNR is set to $20$~dB.
  	}
	\label{DERvsNs20dB}
\end{figure}

Fig.~\ref{DERvsNs20dB} investigates the impact of the number of transmitted data streams $N_s$ on the anonymity-reliability tradeoff, where the SNR is fixed at $20$~dB. We see that increasing $N_s$ enhances anonymity but degrades communication reliability in terms of the SER. 
While all three curves follow the same overall tradeoff pattern,  the curve with low data stream (e.g., $N_s =3$) exhibits a larger DER variation over the same range of $\epsilon$, and its slope is steeper compared with the cases with higher $N_s$. This indicates that the configurations with fewer data streams are more sensitive to changes in the anonymity constraint $\epsilon$, and achieving the same level of DER improvement requires a relatively larger increase in the SER. Consequently, aggressive anonymity enforcement can lead to rapid reliability degradation in low-stream configurations. 

\begin{figure*}[t]
\centering
\begin{minipage}[t]{0.49\textwidth}
    \centering
    \includegraphics[width=\linewidth]{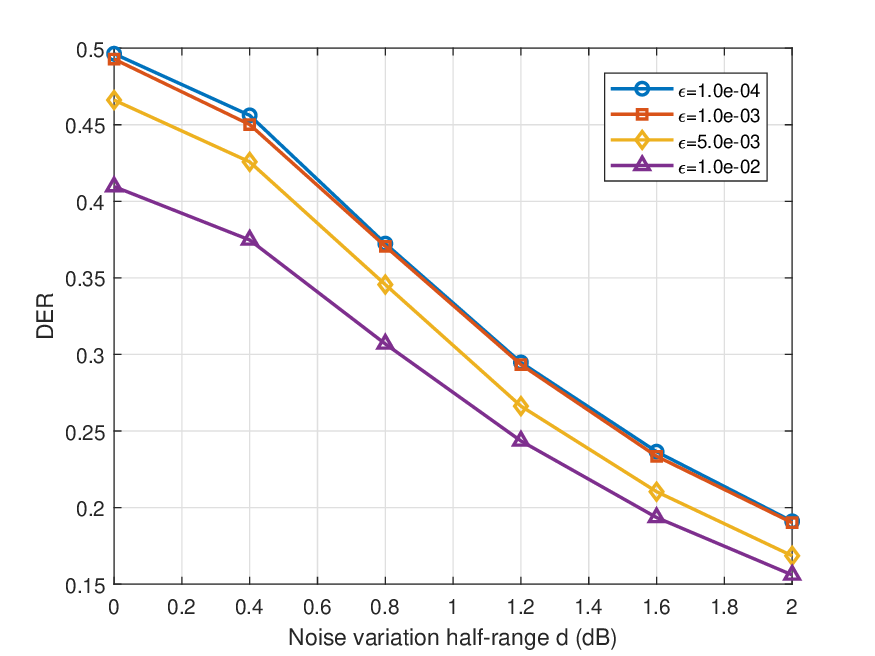}
    \vspace{2pt}
    \centerline{\footnotesize (a) SNR = 10 dB}
\end{minipage}
\hfill
\begin{minipage}[t]{0.49\textwidth}
    \centering
    \includegraphics[width=\linewidth]{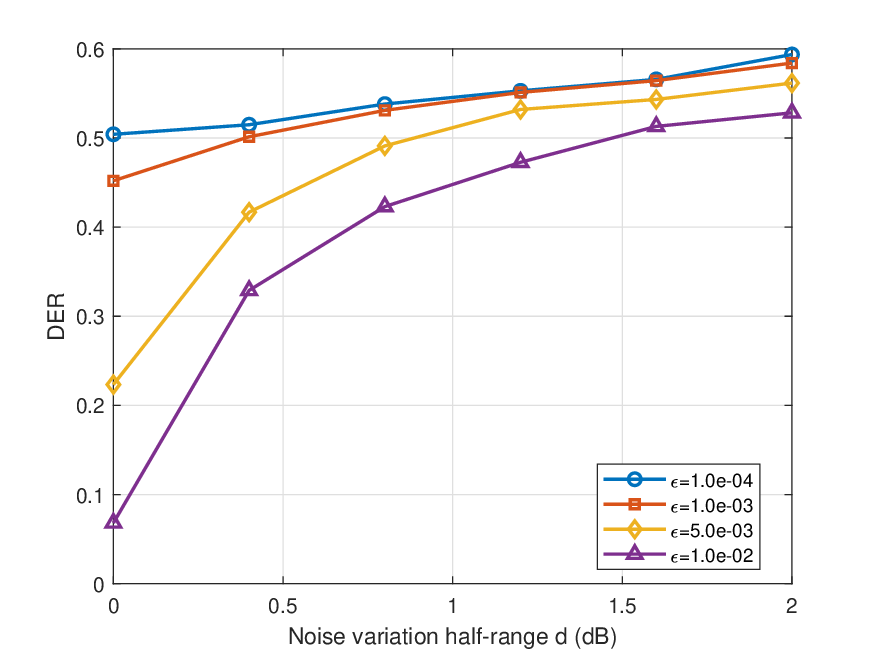}
   \vspace{2pt}
    \centerline{\footnotesize (b) SNR = 20 dB}
\end{minipage}
\caption{DER versus the noise-variation half-range parameter $d$ under different anonymity relaxation parameter $\epsilon$.}
\label{fig:DER_vs_d}
\end{figure*}

In Fig. \ref{fig:DER_vs_d}, we investigate how the heterogeneity of user-dependent noise levels affects the system’s anonymity performance. To this end, we plot the DER as a function of the log-domain noise-variation half-range parameter $d$.
At the low SNR (Fig.~\ref{fig:DER_vs_d}(a), $10$~dB), the DER decreases as $d$ increases, indicating reduced anonymity. In this regime, transmitter identification is primarily determined by noise statistics, and larger dispersion in user noise variances makes the competing hypotheses more distinguishable, which enables more reliable detection. In contrast, at the high SNR (Fig.~\ref{fig:DER_vs_d}(b), $20$~dB), the trend reverses and the DER increases with $d$. Here, the signal-dependent trace term controlled by the anonymous precoder dominates the likelihood metric, while the relative contribution of the noise-dependent constant becomes secondary. Larger alias noise variances reduce the normalized trace difference between hypotheses, which decreases hypothesis separability and consequently improves anonymity. These results demonstrate that the effect of hardware-induced noise heterogeneity on transmitter identification is strongly dependent on the operating SNR.

\begin{figure}
	\centering
    \includegraphics[width=\linewidth]{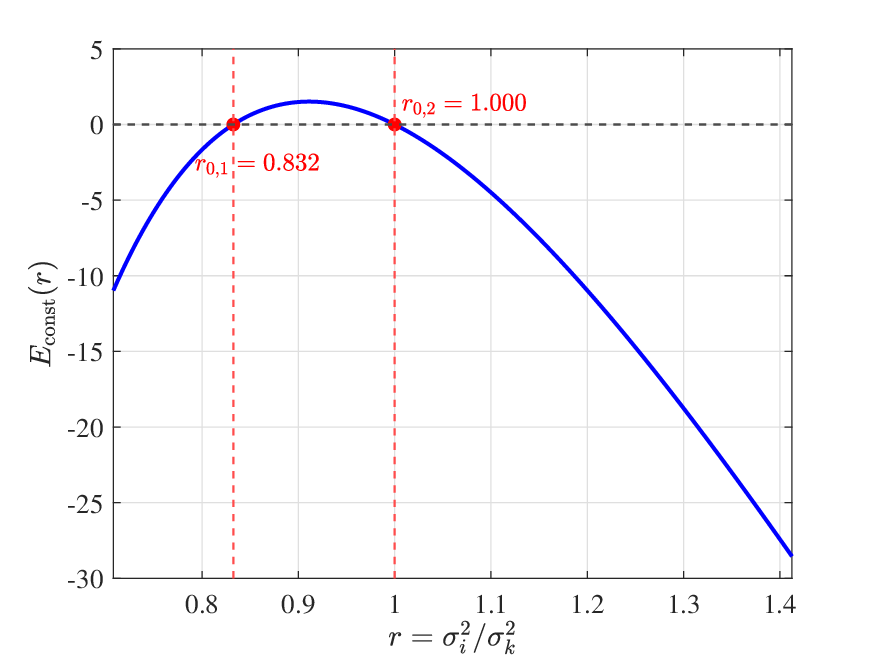}
	\caption{Noise-dependent constant term $E_{\text{const}}$ vs. the noise-variance ratio $r = \sigma_i^2 / \sigma_k^2$.}
	\label{Econst_vs_ratio}
\end{figure}

Fig.~\ref{Econst_vs_ratio} shows the noise-dependent constant term as a function of the noise-variance ratio $r$. The curve is non-monotonic and concave, intersecting zero at $r=1$ and at a second point $r_-<1$. Consequently, the constant term is positive only within the narrow interval $r_-<r<1$, namely, when the alias user has a slightly smaller noise variance than the true user. Only in this region can the alias hypothesis produce a larger likelihood value and meaningfully degrade transmitter identification. For larger deviations in either direction, the constant term becomes negative, which increases hypothesis separability and facilitates detection. We clarify that this behavior highlights the effect of noise heterogeneity on anonymity. When noise statistics dominate, greater variance disparity makes users more distinguishable and reduces DER. In contrast, when the signal-dependent component dominates, larger alias noise variances attenuate the normalized trace difference and reduce separability, resulting in improved anonymity. Thus, the influence of noise heterogeneity on the DER depends on which term governs the test statistic.

\section{Conclusion}\label{sec:conclusion}
 We studied physical-layer transmitter anonymity in uplink multi-antenna systems, where we considered a symbol-aware receiver performing multi-hypothesis transmitter identification using the known symbol matrix, CSI, and transmitter-dependent hardware noise statistics. To enable tractable design, we introduced a KLD-based metric to quantify pairwise anonymity between the active and alias users and derived the corresponding anonymity constraint. We further proposed the P-EGC to enhance communication reliability at the receiver and developed the corresponding anonymous precoding scheme under spatial multiplexing. The overall design was formulated as a convex optimization problem. Simulation results, measured in terms of the DER and SER, showed that the proposed scheme significantly improves transmitter anonymity compared with anonymity-unaware precoding schemes while maintaining reliable communication performance.

\appendices
\numberwithin{equation}{section}
\makeatletter 
\newcommand{\section@cntformat}{Appendix \thesection:\ }
\makeatother

\section{Proof of Lemma \ref{lemma_kld}}\label{Appendix_A}
This appendix derives the KLD between the distributions of the received signal $\mathbf{Y}$ under the true hypothesis $p_k$ and a false hypothesis $p_i$, as follows:

From \eqref{test_stats}, each test statistic consists of a linear combination of a constant term and a random term involving the Frobenius norm. Hence, we define
\begin{align}\label{F_ki}
    F_k=\| \mathbf{\widehat{Y}}_k - \mathbf{Y} \|_F^2 \text{ and } F_i=\| \mathbf{\widehat{Y}}_i - \mathbf{Y} \|_F^2. 
\end{align}
Substituting the reconstructed signals \eqref{Yk_hat} and \eqref{Yi_hat} into \eqref{F_ki} yields
\begin{equation}
    \label{Fnorm}
    \begin{split}
F_k &= \left\| \mathbf{H}_k \mathbf{H}_k^{\dagger} \mathbf{N}_k \mathbf{S}^{\dagger} \mathbf{S} - \mathbf{N}_k \right\|_F^2,\\
F_i &=\left\|  \mathbf{H}_i \mathbf{H}_i^{\dagger} \mathbf{Y}\mathbf{S}^{\dagger} \mathbf{S} - \mathbf{Y} \right\|_F^2.
\end{split}
\end{equation}
We first compute the expectations of $F_k$ and $F_i$. Using the properties of the Frobenius norm and that \( \mathbf{N}_k \) is zero-mean CSCG noise with covariance \( \sigma_k^2 \mathbf{I}_{N_rL} \), the expectation of $F_k$ is given by  
\begin{equation}
    \label{meanFk}
\begin{split}
  \operatorname{E}[F_{k}] &= \operatorname{E}\left[\left\| \mathbf{H}_k \mathbf{H}_k^{\dagger} \mathbf{N}_k \mathbf{S}^{\dagger} \mathbf{S} - \mathbf{N}_k \right\|_F^2 \right]\\
  = &\operatorname{E}\left[\operatorname{tr}\left( (\mathbf{H}_k \mathbf{H}_k^{\dagger} \mathbf{N}_k \mathbf{S}^{\dagger} \mathbf{S} - \mathbf{N}_k)^H (\mathbf{H}_k \mathbf{H}_k^{\dagger} \mathbf{N}_k \mathbf{S}^{\dagger} \mathbf{S} - \mathbf{N}_k) \right)\right] \\
  = &\operatorname{tr}\left(\operatorname{E}[\mathbf{N}_k^H \mathbf{N}_k] \right) - \operatorname{tr}\left(\operatorname{E}[\mathbf{N}_k \mathbf{S}^\dagger \mathbf{S} \mathbf{N}_k^H] \mathbf{H}_k \mathbf{H}_k^\dagger  \right)\\
  =& \sigma_k^2 N_r L - \sigma_k^2 N_t N_s.
\end{split}
\end{equation}
Similarly, the expectation of $F_i$ is obtained as 
\begin{equation}
    \label{meanFi}
\begin{split}
  \operatorname{E}[F_{i}] &= \operatorname{E}\left[\left\| \mathbf{H}_i \mathbf{H}_i^{\dagger} \mathbf{Y} \mathbf{S}^{\dagger} \mathbf{S} - \mathbf{Y} \right\|_F^2 \right]\\
  &= \operatorname{E}\left[\operatorname{tr}\left( (\mathbf{H}_i \mathbf{H}_i^{\dagger} \mathbf{Y} \mathbf{S}^{\dagger} \mathbf{S} - \mathbf{Y})^H (\mathbf{H}_i \mathbf{H}_i^{\dagger} \mathbf{Y} \mathbf{S}^{\dagger} \mathbf{S} - \mathbf{Y}) \right)\right] \\
  &= \operatorname{tr}\left(\operatorname{E}[\mathbf{Y}^H \mathbf{Y}] \right) - \operatorname{tr}\left(\operatorname{E}[\mathbf{Y}^H \mathbf{H}_i \mathbf{H}_i ^\dagger\mathbf{Y}] \mathbf{S}^\dagger \mathbf{S} \right)\\
  &= \sigma_k^2 N_r L - \sigma_k^2 N_t N_s + \operatorname{tr} \left( \mathbf{S}^H \mathbf{W}_k^H \mathbf{H}_k^H \mathbf{H}_k \mathbf{W}_k \mathbf{S}\right)\\
  &- \operatorname{tr} \left( \mathbf{S}^H \mathbf{W}_k^H \mathbf{H}_k^H \mathbf{H}_i \mathbf{H}_i^\dagger \mathbf{H}_k \mathbf{W}_k \mathbf{S}\right).
\end{split}
\end{equation}
Consequently, the corresponding expectations of test statistics $T_k$ and $T_i$ follow directly by linearity of expectation as 
\begin{equation}
    \label{Tsmeans}
    \begin{split}
      \operatorname{E}[T_{k}] &=  -N_r L \ln (\pi \sigma_k^2) - (N_r L - N_s N_t),\\
       \operatorname{E}[T_{i}] &=  -N_r L \ln (\pi \sigma_i^2) -\frac{\sigma_k^2 }{\sigma_i^2} (N_r L - N_s N_t) \\
       & \quad -\frac{1}{ \sigma_i^2}  \left( \operatorname{tr} \left( \mathbf{S}^H \mathbf{W}_k^H \mathbf{H}_k^H \mathbf{H}_k \mathbf{W}_k \mathbf{S} \right) \right.\\
       & \left. \quad - \operatorname{tr} \left( \mathbf{S}^H \mathbf{W}_k^H \mathbf{H}_k^H \mathbf{H}_i \mathbf{H}_i^{\dagger}\mathbf{H}_k \mathbf{W}_k \mathbf{S} \right) \right).\\
    \end{split}
\end{equation}
We next define 
\[
p_k(\mathbf{Y}) \triangleq p(\mathbf{Y}; \widehat{\mathbf{W}}_k,\mathcal{H}_k) \text{ and } p_i(\mathbf{Y}) \triangleq p(\mathbf{Y}; \widehat{\mathbf{W}}_i,\mathcal{H}_i),
\]
such that $T_k = \ln p_k(\mathbf{Y})$ and $T_i = \ln p_i(\mathbf{Y})$. When hypothesis $\mathcal{H}_k$ is true, $\mathbf{Y} \sim p_k(\mathbf{Y})$, and thus
\begin{equation}
\label{Epkandi}
\begin{split}
    \operatorname{E}[T_k]  &= \int p_k(\mathbf{Y}) \ln p_k(\mathbf{Y}) \,d\mathbf{Y}, \\
    \operatorname{E}[T_i]  &= \int p_k(\mathbf{Y}) \ln p_i(\mathbf{Y}) \,d\mathbf{Y}.
\end{split}  
\end{equation}
Taking their difference yields
\begin{align}
\label{ETiETk}
\operatorname{E}[T_i] - \operatorname{E}[T_k] 
&= \int p_k(\mathbf{Y}) \Big( \ln p_i(\mathbf{Y}) - \ln p_k(\mathbf{Y}) \Big) \,d\mathbf{Y} \nonumber\\
&= - \int p_k(\mathbf{Y}) \ln \frac{p_k(\mathbf{Y})}{p_i(\mathbf{Y})} \,d\mathbf{Y} \nonumber\\
&= - D_{\mathrm{KL}}(p_k \| p_i),
\end{align}
which implies that the KLD between $p_k$ and $p_i$ can be computed directly from the difference between the expectations of $T_i$ and $T_k$, without requiring the full distributions of $T_k$ and $T_i$. Finally, substituting (\ref{Tsmeans}) into \eqref{ETiETk} establishes the result in Lemma~\ref{lemma_kld}.
\hfill$\blacksquare$




\bibliographystyle{IEEEtran}
\bibliography{references.bib}

\end{document}